\documentclass[11pt,english]{article}

\usepackage{amsmath,amssymb,amsthm}
\usepackage{multicol}
\usepackage[table]{xcolor}
\usepackage[margin=1in]{geometry}
\usepackage{graphicx,color}
\usepackage{enumitem}
\usepackage{fullpage}
\usepackage[noblocks]{authblk}
\usepackage{tcolorbox}
\usepackage{babel}
\usepackage{wrapfig}
\usepackage{MnSymbol}
\usepackage{mdframed}
\usepackage{mathtools}
\usepackage{floatrow}
\usepackage{multirow}
\usepackage{booktabs}
\usepackage{palatino}


\usepackage[ruled,linesnumbered,vlined]{algorithm2e}
\usepackage{tablefootnote}
\usepackage{thm-restate}
\usepackage{bbding}
\usepackage{pifont}
\usepackage{nicefrac}

\newcommand{\R}{\mathbb{R}}

\newcommand{\down}[1]{\left\lfloor #1\right\rfloor}
\newcommand{\up}[1]{\left\lceil #1\right\rceil}

\newcounter{magicrownumbers}

\graphicspath{{./figures/}}

\usepackage{tablefootnote}

\definecolor{Darkblue}{rgb}{0,0,.8}
\definecolor{Brown}{cmyk}{0,0.61,1.,0.60}
\definecolor{Purple}{cmyk}{0.45,0.86,0,0}
\definecolor{Darkgreen}{rgb}{0.133,0.500,0.133}

\definecolor{MyGreen}{rgb}{0.200,0.500,0.200}
\renewcommand{\emph}[1]{{\color{MyGreen}{\em #1}}}

\usepackage[colorlinks,linkcolor=Darkblue,filecolor=blue,citecolor=blue,urlcolor=Darkblue,pagebackref]{hyperref}
\usepackage[nameinlink]{cleveref}

\usepackage[colorinlistoftodos,prependcaption,textsize=tiny]{todonotes}

\newcommand{\namedref}[2]{\hyperref[#2]{#1~\ref*{#2}}}
\newcommand{\propref}[1]{\hyperref[#1]{property~(\ref*{#1})}}

\newcommand{\IGNORE}[1]{}

\newtheorem*{theorem*}{Theorem}
\newtheorem{theorem}{Theorem}[section]
\newtheorem{lemma}[theorem]{Lemma}

\newtheorem{observation}[theorem]{Observation}
\newtheorem{corollary}[theorem]{Corollary}

\newtheorem*{question*}{Question}

\newtheorem{remark}[theorem]{Remark}

\newtheorem*{conjecture*}{Conjecture}

\newcommand{\old}[1]{{}}

\hypersetup{
  colorlinks=true,
  linkcolor=red!70!black,
  linktoc=all,
  citecolor=violet,
    pdfpagemode=FullScreen,
}

\title{Sparse Bounded Hop-Spanners for\\ Geometric Intersection Graphs}

\date{}

 \author{Sujoy Bhore\thanks{Department of Computer Science \& Engineering, Indian Institute of Technology Bombay, Mumbai, India.\\ Email: \href{sujoy@cse.iitb.ac.in}{sujoy@cse.iitb.ac.in}}
 \quad
 Timothy M. Chan\thanks{Siebel School of Computing and Data Science, University of Illinois at Urbana-Champaign.\\ Email: \href{tmc@illinois.edu}{tmc@illinois.edu}. 
 Work supported in part by
NSF Grant CCF-2224271.}
\quad
Zhengcheng Huang\thanks{Siebel School of Computing and Data Science, University of Illinois at Urbana-Champaign. Email: \href{zh3@illinois.edu}{zh3@illinois.edu}}\\
\quad
Shakhar Smorodinsky\thanks{Department of Computer Science, Ben-Gurion University of the Negev, Be'er Sheva, Israel.\\ Email: \href{shakhar@bgu.ac.il}{shakhar@bgu.ac.il}. Partially supported by Grant 1065/20 from the Israel Science Foundation, by the United States – Israel Binational Science Foundation (NSF-BSF grant no. 2022792) and by ERC grant no.~882971 "GeoScape" and by the Erd{\H o}s Center.}
\quad
Csaba D. T\'{o}th\thanks{Department of Mathematics, California State University Northridge, Los Angeles, USA and Department of Computer Science, Tufts University, Medford, MA, USA Email: \href{csaba.toth@csun.edu}{csaba.toth@csun.edu}. Research supported in part by the NSF award DMS-2154347}}


\begin{document}
\maketitle

\begin{abstract}
    We present 
    new results on $2$- and $3$-hop spanners for geometric intersection graphs. These include improved upper and lower bounds for $2$- and $3$-hop spanners for many geometric intersection graphs in $\R^d$.
    For example, we show that the intersection graph of $n$ balls in $\R^d$ admits a $2$-hop spanner of size $O^*\left(n^{\frac{3}{2}-\frac{1}{2(2\lfloor d/2\rfloor +1)}}\right)$ and the intersection graph of $n$ fat axis-parallel boxes in $\R^d$ admits a $2$-hop spanner of size $O(n \log^{d+1}n)$. 
    
    Furthermore, we show that the intersection graph of general semi-algebraic objects in $\R^d$  admits a $3$-hop spanner of size $O^*\left(n^{\frac{3}{2}-\frac{1}{2(2D-1)}}\right)$, where $D$ is a parameter associated with the description complexity of the objects. For such families (or more specifically, for tetrahedra in $\R^3$), we provide a lower bound of $\Omega(n^{\frac{4}{3}})$.
    For $3$-hop and axis-parallel boxes in $\R^d$, we provide the upper bound $O(n \log ^{d-1}n)$ and lower bound $\Omega\left(n (\frac{\log n}{\log \log n})^{d-2}\right)$.
\end{abstract}

\section{Introduction}
A \textit{spanner} of a graph $G$ is a subgraph $G'$ that approximately preserves the distances between pairs of vertices in $G$. More formally, let $G=(V,E)$ be a simple graph. A spanning subgraph $G'=(V,E')$ is called a \emph{$t$-spanner} for $G$ if for every two vertices $x,y\in V$ $d_{G'}(x,y) \leq t \cdot d_G(x,y)$ where $d_G(x,y)$ denotes the shortest path distance between $x$ and $y$ in $G$. Here, \( t \geq 1 \) is a constant called the \textit{stretch factor} (or \textit{stretch} for short). Spanners are an important concept in graph theory and computer science, particularly in the context of geometric intersection graphs, where vertices represent geometric objects (such as approximate location, e.g., disks, rectangles) and edges represent intersections between these objects. The ability to construct spanners for geometric intersection graphs with fewer edges is crucial because it enables efficient algorithms for problems such as shortest path computation, network design, and routing while maintaining proximity to the original geometric distances. By using spanners, one can often reduce the complexity of geometric problems without sacrificing the quality of the solution (see, e.g., \cite{GudmundssonNS16, Li03, mitchell2017proximity, narasimhan2007geometric, Rajaraman02}).
\\

The following theorem is a cornerstone in the theory of spanners for general graphs:
\begin{theorem}[\cite{althofer1993sparse}]\label{thm:althofer}
    Let $G=(V,E)$ be a simple undirected graph on $n$ vertices. Let $t \geq 1$ be a fixed integer parameter. Then $G$ contains a $(2t-1)$-spanner $G'$ with $O(n^{1+\frac{1}{t}})$ edges. 
\end{theorem}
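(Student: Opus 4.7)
The plan is to use the classical \emph{greedy spanner} construction. Fix an arbitrary ordering of the edges of $G$ (for the unweighted version this order is immaterial; for the weighted version one sorts by nondecreasing weight). Process the edges one by one, maintaining a spanning subgraph $G'$ initially having no edges. When the current edge $e=(u,v)$ is considered, add it to $G'$ if and only if the current distance $d_{G'}(u,v)$ exceeds $2t-1$ (or, in the weighted case, exceeds $(2t-1)\,w(e)$). Output the final $G'$.

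Stretch is immediate by construction: if $e=(u,v)$ is not added, then at the moment it was processed there already existed a $u$--$v$ path of length at most $2t-1$ in $G'$, and this path remains present at the end. Combined with the fact that every edge that \emph{is} added serves itself, this gives $d_{G'}(x,y) \leq (2t-1)\, d_G(x,y)$ for every pair of vertices by summing along a shortest path in $G$.

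The nontrivial part is the edge bound. The key claim is that the resulting subgraph $G'$ has girth strictly greater than $2t$. Indeed, suppose for contradiction that $G'$ contains a cycle $C$ of length at most $2t$, and let $e=(u,v)$ be the last edge of $C$ inserted by the greedy procedure. At the moment $e$ was processed, the remaining edges of $C$ already formed a $u$--$v$ path in $G'$ of at most $2t-1$ edges, so $d_{G'}(u,v)\le 2t-1$ at that moment, contradicting the greedy insertion rule. Once girth $>2t$ is established, one invokes the standard Moore-type bound: any $n$-vertex graph of girth at least $2t+1$ has at most $O\!\left(n^{1+1/t}\right)$ edges. This follows from a double counting of BFS balls of radius $t$ around a hypothetical high-degree vertex, using that such balls must be trees in a graph of girth $>2t$.

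The only real obstacle is the Moore-bound step, and it is entirely combinatorial and independent of the graph being a spanner; the greedy argument for girth is a short two-line contradiction. In the weighted case one needs to check that the same cycle argument still produces a contradiction with weights, which works because the last-inserted edge in the cycle is the heaviest and each of the remaining edges has weight at most $w(e)$, so their total length is at most $(2t-1)\,w(e)$ provided the cycle has at most $2t$ edges.
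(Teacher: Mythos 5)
Your proposal is correct and is exactly the greedy-spanner argument the paper alludes to (the theorem is cited from Alth\"ofer et al., and the text notes that ``the spanner $G'$ of this size can be computed by a greedy algorithm''): greedy insertion forces girth greater than $2t$, and the Moore-type bound for graphs of girth at least $2t+1$ gives the $O(n^{1+1/t})$ edge count. The only cosmetic imprecision is in your sketch of the Moore bound, where one should pass to a subgraph of minimum degree at least $m/n$ (not merely find one high-degree vertex) before growing the BFS tree of radius $t$; this is standard and does not affect correctness.
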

Note that the above theorem is non-trivial only for $t \geq 2$ or stretch factor at least $3$. Indeed, one cannot hope to obtain a $2$-spanner with $o(n^2)$ edges in general as is exemplified by the complete bi-partite graph $K_{\lfloor \frac{n}{2}\rfloor,\lceil \frac{n}{2}\rceil}$.
The general bound $O(n^{1+\frac{1}{t}})$ is conjectured to be optimal due to the \emph{Erd\H{o}s Girth Conjecture}~\cite{Erdos64extremalproblems}, and the spanner $G'$ of this size can be computed by a greedy algorithm. It would be interesting to ask what families of graphs admit 2-spanners with $o(n^2)$ edges or $3$-spanners with $o(n^{3/2})$ edges.

The \textbf{intersection graph} of a set system $(X,\mathcal{F})$ is a graph $G=(\mathcal{F},E)$, where the vertices correspond to the sets in $\mathcal{F}$, and there is an edge between sets $A,B\in \mathcal{F}$ if and only if $A\cap B\neq \emptyset$. Every edge has unit weight. For geometric intersection graphs, $t$-spanners are also called \emph{$t$-hop spanners} to emphasize the use of hop distance.
In this paper, our goal is to obtain hop spanners for geometric intersection graphs that are smaller than guaranteed by \Cref{thm:althofer}.

\noindent \paragraph{Previous work on spanners in intersection graphs.}
Initial work on spanners for geometric intersection graphs was motivated by wireless communication, and was limited to unit disk graphs in the plane. In this setting, spanners for weighted graphs were also considered, where the weight of an edge is the distance between the disk centers~\cite{FurerK12,GaoGHZZ05,GaoZ05}. Recent work focused on hop spanners for unit disk graphs~\cite{biniaz2020plane,CatusseCV10,DumitrescuGT22}, and intersection graphs of other geometric objects (including string graphs)~\cite{ChanH23,ConroyT23}.
Most previous bounds are constrained to objects in the plane. 
Chan and Huang~\cite{ChanH23} use shallow cuttings to show that any family of $n$ objects, with union complexity bounded by $\mathcal{U}(n)$, admits a $2$-hop spanner of size $O(\mathcal{U}(n) \log n)$. This yields an $O(n\log n)$ bound for disks or pseudodisks in the plane~\cite{KedemLPS86}; 
$n\log n\cdot 2^{O(\log^* n)}$ for fat objects of constant algebraic description complexity~\cite{AronovBES14}, and in particular $O(n\log n \log^*n )$ for fat triangles in the plane~\cite{AronovBES14}. 
Lower bounds can be derived from constructions for geometric intersection graphs with girth $2t+1$, as a $t$-spanner contains the entire graph~\cite{basit21,Davies21,tomonzakharov2021}. In particular, Davies~\cite{Davies21} constructed families of axis-aligned boxes in $\R^3$ whose intersection graph has unbounded girth and  $\omega(n)$ edges. The boxes in Davies' construction project to squares in the $xy$-plane, so they can be lifted to hypercubes in $\R^4$ with the same intersection pattern.
See \Cref{tab:old-results} for a list of known results.

\begin{table}\centering
\begin{tabular}{llllll}
Hops &  Objects & Dim. & Upper Bounds & Lower Bounds &Ref.\\\hline
2  & unit disks  & 2 & $O(n)$ & & \cite{ConroyT23}\\
   & translates of a convex body & 2 &  $O(n)$ &  & \cite{ConroyT23}\\
   & axis-aligned fat rectangles & 2 & $O(n \log n)$ & $\Omega(n\frac{\log n}{\log \log n})$ & \cite{ConroyT23}\\ 
   & disks of arbitrary radii ($\ast$)  & 2 & $O(n \log n)$ & $\Omega(n\frac{\log n}{\log \log n})$ & \cite{ChanH23,ConroyT23}\\ 
   & general fat objects ($\ast\ast$) & 2 & $n\log n\cdot 2^{O(\log^* n)}$ & $\Omega(n\frac{\log n}{\log \log n})$ & \cite{ChanH23,ConroyT23}\\ 
\hline
3   & general fat objects  & $d\ge 2$ & $O(n \log n)$ & &\cite{ChanH23}\\ 
   & axis-aligned rectangles & 2 & $O(n \log n)$ & & \cite{ChanH23}\\ 
  & strings & 2 & $O(n \log^3 n)$  & & \cite{ChanH23}\\ 
\hline 
$O_k(1)$ & strings & 2 & $O(n \alpha_k(n))$ & & \cite{ChanH23}\\ 
$O(1)$ & axis-aligned boxes & $d\geq 3$ & & $\omega(n)$ & \cite{Davies21}\\
$O(1)$ & axis-aligned hypercubes & $d\geq 4$ & $O(n \alpha_k(n))$ & $\omega(n)$ & \cite{ChanH23,Davies21}\\
$O_{d,k}(1)$ & general fat objects & $d$ &  $O(n \alpha_k(n))$ & & \cite{ChanH23}
\\\hline                                                 
\end{tabular}
\caption{Summary of previously known results. The trivial lower bound in each case is $\Omega(n)$. 
The upper bound in ($\ast$) generalizes to $O(\mathcal{U}(n) \log n)$ for objects with union complexity $\mathcal{U}(.)$; and the lower bound holds for homothets of any convex body in the plane. 
The upper bound in ($\ast\ast$) holds for fat objects 
of constant algebraic description complexity.
The function $\alpha_k(.)$ denotes the $k$-th function in the inverse Ackermann hierarchy: $\alpha_0(n)=n/2$, $\alpha_1(n)=\log n$, $\alpha_2(n)=\log^* n$ (the iterated logarithm), $\alpha_3(n)=\log^{**} n$ (the iterated iterated logarithm), 
etc.}
\label{tab:old-results}
\end{table}

\subsection{New results}
In this paper, we focus on 2- and 3-spanners. Recall that in general there is no
hope to obtain a $2$-spanner of sub-quadratic size (in $n$) nor a $3$-spanner of $o(n^{3/2})$ size for general graphs with $n$ vertices. No sub-quadratic bounds were previously known for $2$-spanners for intersection graphs in dimension $3$ or more even for unit balls. 
And no sub-$n^{3/2}$ bounds were previously known for 3-spanners for intersection graphs of non-fat objects in dimension above 2, even for simplices.
We give the first 2-spanner constructions of sub-quadratic size (in fact, $O(n^{3/2})$ or better)
for many types of fat objects, including balls, and we give the first 3-spanner constructions of sub-$n^{3/2}$ size
for many types of non-fat objects, including simplices.
Furthermore, we complement our upper bounds with a number of lower bounds ruling out near-linear-size spanners for many types of objects.
See \Cref{tab:new-results} for a detailed list of all results.\footnote{
In this paper, we use the $O^*$ notation to hide $n^\varepsilon$ factors for an arbitrarily small constant $\varepsilon>0$.
}
Our results answer open questions by Chan and Huang~\cite{ChanH23}.\footnote{
At the very end of their paper,
Chan and Huang~\cite{ChanH23} specifically wrote: ``Another question is whether near-linear bounds are possible for other intersection graphs
 not addressed here, e.g., for simplices in $\R^3$. Here, one might want to start more modestly
 with any upper bound better than for general graphs.''  Our $\Omega(n^{4/3})$ lower bound answers the question in the first sentence in the negative for simplices in $\R^3$,
while our upper bounds show that the goal in the second sentence is attainable at least in the case of 3 hops.
}

One of our main results deals with 2- and 3-spanners for semi-algebraic graphs. The notion of semi-algebraic graphs was the focus of several recent papers. Most notably, Fox et al.~\cite{FPSSZ17} provided improved bounds on Zarankiewicz's problem for the class of semi-algebraic graphs. Informally, a semi-algebraic graph $G=(V, E)$ is a bipartite graph where the vertices can be represented as points in $\R^d$ for some fixed small constant dimension $d$ and for every pair of vertices $u,v$ we have that $\{u,v\} \in E$ if some boolean combination of a constant number of inequalities involving $2d$-variate polynomials of constant degree is satisfied, where the variables in the polynomials are the coordinates of the two points representing $u$ and $v$ (see below for a formal definition). Almost all intersection graphs in the literature are, in fact, semi-algebraic. One of our main results is a bound of the form $O(n^{\frac{3}{2} -\delta_d})$ for 3-hop spanner for arbitrary intersection graphs of semi-algebraic sets in $\R^d$ and for 2-hop for semi-algebraic sets that are also ``fat'' where $\delta_d$ is a constant that depends on the description complexity of the objects.  In this regard, it is important to note that without the fatness assumption, no sub-quadratic 2-hop spanner can be obtained as the graph $K_{n,n}$, which has $n^2$ edges, can be realized as the intersection of $n$ horizontal and $n$ vertical narrow rectangles, and any 2-hop spanner must contain all $n^2$ edges. Also, a common property for
semi-algebraic intersection graphs is that they all have bounded VC-dimension. However, bounded VC-dimension by itself does not guarantee $o(n^{\frac{3}{2}})$ edges as demonstrated later in \Cref{thm:bipartite-boundedVC-LB}. Thus, the extra geometric properties that these graphs possess are crucial in order to obtain such bounds. Perhaps not surprisingly, one of our main tools for tackling this problem is an application of the divide-and-conquer, specifically the \textbf{polynomial partitioning} technique. This technique was introduced by Guth and Katz in their groundbreaking work on the \emph{Erd\H{o}s distinct distances} problem~\cite{GuthKatz2015}. This method involves partitioning the space into cells defined by low-degree polynomials. A key result states that for a set of $n$ points in $\R^d$, and for any $D>0$, there exists a $d$-variate polynomial $f$ of degree $D$ such that the zero set $Z(f)$ of $f$ partitions $\R^d$
into $O(D^d)$ cells (i.e., connected components of $\mathbb{R}^d \setminus Z(f)$), each containing in its interior at most $\frac{n}{D^d}$ points.

\begin{table}\centering
\begin{tabular}{lllll}
Hops &  Objects & Dimension & Upper Bounds & Lower Bounds\\\hline
2 &  general fat objects & $d\ge 3$ & $O(n^{3/2})$ & $\Omega(n^{4/3})$\\
 &  general fat objects ($\ast$) & $d\ge 3$ & $O^*(n^{\frac{3}{2}-\frac{1}{2(2D-1)}})$ & $\Omega(n^{4/3})$\\
  & fat simplices ($\ast\ast$) & 3 & $O^*(n^{10/7})$  & $\Omega(n^{4/3})$ \\
  &                & $d\ge 4$ & $O(n^{\frac{3}{2}-\frac{1}{O(d^2)}})$ & $\Omega(n^{4/3})$ \\
  &  balls ($\dag$)  & 3 & $O^*(n^{4/3})$ & \\
  &                & 4 & $O^*(n^{7/5})$ & \\
  &                & 5 & $O^*(n^{7/5})$ & $\Omega(n^{4/3})$ \\
  &                & $d\ge 6$ & $O^*(n^{\frac{3}{2}-\frac{1}{2(2\down{d/2}+1)}})$ & $\Omega(n^{4/3})$\\
  & fat axis-aligned boxes ($\ddag$)       & $d\ge 3$ & $O(n\log^{d+1}n)$ & $\Omega(n(\frac{\log n}{\log\log n})^{\down{d/2}-1}$)\\\hline
3 & general objects ($\ast$) & $d\ge 3$ & $O^*(n^{\frac{3}{2}-\frac{1}{2(2D-1)}})$ & $\Omega(n^{4/3})$\\
  & simplices       & 3  & $O^*(n^{10/7})$ & $\Omega(n^{4/3})$ \\
  &                               & $d\ge 4$ & $O(n^{\frac{3}{2}-\frac{1}{O(d^2)}})$ & $\Omega(n^{4/3})$ \\
  & axis-aligned boxes            & $d\ge 3$ & $O(n\log^{d-1}n)$ & $\Omega(n(\frac{\log n}{\log\log n})^{d-2}$)\\\hline                                                 
\end{tabular}
\caption{Summary of new results.  Objects for ($\ast$) are assumed to be semi-algebraic for some parameter $D$ associated
with their description complexity.
The upper bounds for ($\ast\ast$) hold more generally for fat polyhedra of constant complexity, while
the lower bounds for ($\ast\ast$) hold more specifically for unit regular tetrahedra or unit non-axis-aligned cubes in $\R^3$.
The lower bounds for ($\dag$) hold more specifically for unit balls.
The lower bound for ($\ddag$) holds specifically for unit axis-aligned hypercubes.
}

\label{tab:new-results}
\end{table}

\old{
\begin{table}[htp]
\begin{center}
\begin{tabular}{ |c|c|c|c| }
\hline
Space & Geometric Objects &  Stretch & Spanner Size  \\
\hline
$\mathbb{R}^2$  & Unit disks  & 2 & $O(n)$~\cite{ConroyT22}\\
\hline
$\mathbb{R}^2$  &  Translates of a convex body & 2 &  $O(n)$~\cite{ConroyT22}\\
\hline
$\mathbb{R}^2$ & Axis-aligned fat rectangles & 2 & $\Omega(n\log n/\log \log n)$, $O(n \log n)$ \cite{ConroyT22}\\ 
\hline
$\mathbb{R}^2$ & Objects with union complexity $\mathcal{U}(.)$ & 2 & $O(\mathcal{U}(n) \log n)$ \cite{ChanH23}\\ 

\hline
$\mathbb{R}^2$ & Fat convex bodies & 3 & $O(n \log n)$ \cite{ConroyT22}\\ 
\hline
$\mathbb{R}^2$ & Axis-aligned rectangles & 3 & $O(n \log n)$ \cite{ChanH23}\\ 
\hline
$\mathbb{R}^2$ & Strings & 3 & $O(n \log^3 n)$ \cite{ChanH23}\\ 
\hline
$\mathbb{R}^2$ & Strings & $O_k(1)$ & $O(n \alpha_k(n))$ \cite{ChanH23}\\ 
\hline
$\mathbb{R}^d$ & Fat objects & 3 & $O(n \log n)$ \cite{ChanH23}\\ 
\hline
$\mathbb{R}^d$ & Fat objects & $O_{d,k}(1)$ & $O(n \alpha_k(n))$ \cite{ChanH23}\\ 
\hline
\end{tabular}
\vspace{.2cm}
\caption{\label{tab:2} Current best bounds for $t$-spanners for geometric intersection graphs.  The function $\alpha_k(.)$ denotes the $k$-th function in the inverse Ackermann hierarchy: $\alpha_0(n)=n/2$, $\alpha_1(n)=\log n$, $\alpha_2(n)=\log^* n$ (the iterated logarithm), $\alpha_3(n)=\log^{**} n$ (the iterated iterated logarithm), 
etc.}
\end{center}
\end{table}
}
\old{
\begin{table}\centering
\begin{tabular}{ |c|c|c|c|c|}
\hline
Strech &  Objects & Dimension & Upper Bounds & Lower Bounds\\
\hline
2 &  general fat objects ($\ast$) & $d\ge 3$ & $O^*(n^{\frac{3}{2}-\frac{1}{2(2D-1)}})$ & $\Omega(n^{4/3})$\\
\hline
  2 & fat simplices ($\ast\ast$) & 3 & $O^*(n^{10/7})$  & $\Omega(n^{4/3})$ \\
 &                & $d\ge 4$ & $O(n^{\frac{3}{2}-\frac{1}{O(d^2)}})$ & $\Omega(n^{4/3})$ \\
  \hline
 2 &  balls ($\dag$)  & 3 & $O^*(n^{4/3})$ & \\
  &                & 4 & $O^*(n^{7/5})$ & \\
  &                & 5 & $O^*(n^{7/5})$ & $\Omega(n^{4/3})$ \\
  &                & $d\ge 6$ & $O^*(n^{\frac{3}{2}-\frac{1}{2(2\down{d/2}+1)}})$ & $\Omega(n^{4/3})$\\
  \hline
 2 & fat axis-aligned boxes ($\ddag$)       & $d\ge 3$ & $O(n\log^{d+1}n)$ & $\Omega(n(\frac{\log n}{\log\log n})^{\down{d/2}-1}$)\\
  \hline
3 & general objects ($\ast$) & $d\ge 3$ & $O^*(n^{\frac{3}{2}-\frac{1}{2(2D-1)}})$ & $\Omega(n^{4/3})$\\
\hline
3  & simplices       & 3  & $O^*(n^{10/7})$ & $\Omega(n^{4/3})$ \\
  &                               & $d\ge 4$ & $O(n^{\frac{3}{2}-\frac{1}{O(d^2)}})$ & $\Omega(n^{4/3})$ \\
  \hline
 3 & axis-aligned boxes            & $d\ge 3$ & $O(n\log^{d-1}n)$ & $\Omega(n(\frac{\log n}{\log\log n})^{d-2}$)\\\hline                                                                         
\end{tabular}
\caption{Summary of new results.  Objects for ($\ast$) are assumed to be semi-algebraic for some parameter $D$ associated
with their description complexity.
The upper bounds for ($\ast\ast$) hold more generally for fat polyhedra of constant complexity, while
the lower bounds for ($\ast\ast$) hold more specifically for unit regular tetrahedra or unit non-axis-aligned cubes in ${\mathbb R}^3$.
The lower bounds for ($\dag$) hold more specifically for unit balls.
The lower bound for ($\ddag$) holds more specifically for unit axis-aligned hypercubes.}
\label{tab:new-results}
\end{table}
}

\section{Upper Bounds for 3 Hops}\label{sec:3hop}

In this section, we prove new upper bounds for 3-hop spanners for geometric intersection graphs.
We begin with an easy lemma about general bipartite graphs in a lop-sided case:

\begin{lemma}\label{lem:easy:3hop}
Let $G=(U\cup V,E)$ be a graph where $|U|=m$, $|V|=n$, and $V$ is an independent set.
Then $G$ has a 3-hop spanner of size $O(m^2 + n)$.
\end{lemma}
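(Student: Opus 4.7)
The plan is a three-round greedy construction. First, include every $G$-edge with both endpoints in $U$; this contributes at most $\binom{m}{2}=O(m^2)$ edges and takes care of all edges internal to $U$. Second, enumerate the unordered pairs $\{u,u'\}\subseteq U$ that share at least one common neighbor in $V$, and for each such pair pick an arbitrary witness $v(u,u')\in N(u)\cap N(u')$ and add both edges $(v(u,u'),u)$ and $(v(u,u'),u')$ to the spanner; this contributes at most $2\binom{m}{2}=O(m^2)$ edges. Third, for every $v\in V$ that is still isolated in the partial spanner but has at least one $G$-neighbor, add a single edge $(v,u_v)$ with $u_v\in N(v)$; this contributes at most $n$ edges. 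The total size is $O(m^2+n)$.

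To verify the 3-hop property, note that edges internal to $U$ are preserved verbatim and that $V$ has no internal edges. Consider any $(v,u)\in E$ with $v\in V$ and $u\in U$ that is not in the spanner $G'$. Rounds 2 and 3 together guarantee that $v$ has at least one spanner-neighbor $s\in U$ (since $u\in N(v)$ forces $v$ to be non-isolated in $G$), and our assumption forces $s\ne u$. The pair $\{u,s\}$ has $v$ as a common $V$-neighbor, so round 2 selected some witness $v(u,s)$; if $v(u,s)=v$ then both $(v,u)$ and $(v,s)$ would lie in $G'$, contradicting the assumption. Hence $v(u,s)\ne v$, and the path
\[
v \;\longrightarrow\; s \;\longrightarrow\; v(u,s) \;\longrightarrow\; u
\]
uses only edges of $G'$ and has exactly three hops, as required.

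The only mildly subtle step — and essentially the main obstacle — is recognizing that attaching each $v\in V$ to a single $U$-representative is not by itself enough: two $U$-neighbors of $v$ need not be close inside $U$, so the ``return trip'' has to thread back through a vertex of $V$. The pair-witness construction spends $O(m^2)$ edges to guarantee that for every pair $\{u,s\}\subseteq U$ with a common $V$-neighbor, some vertex of $V$ is simultaneously linked to both in $G'$, which is precisely the hub needed to close the three-hop detour. Once this construction is in place, the verification is a short case analysis with no nontrivial counting.
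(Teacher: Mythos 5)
Your construction is correct and is essentially the paper's proof: both keep all of $G[U]$, attach each non-isolated $v\in V$ to one representative in $U$, and spend $O(m^2)$ edges on a 2-hop connection for each pair of $U$-vertices, then route a missing edge $uv$ through $v$'s representative and that pair-connection. The only cosmetic difference is that you restrict the pair-witnesses to common neighbors in $V$ and trace the 3-hop path starting from $v$, whereas the paper allows an arbitrary 2-hop path $\pi_{u,u'}$ and traces it starting from $u$; the argument is the same.
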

\begin{proof}
We construct a subgraph $H$ of $G$ as follows. For each $v\in V$ with $\deg(v)\geq 1$, let $e_v$ be an arbitrary edge incident to $v$ in $G$, and add $e_v$ to $H$. Add $G[U]$ to $H$.
Furthermore, for each pair of vertices $u,u'\in U$, let $\pi_{u,u'}$ be a 2-hop path between $u$ and $u'$ in $G$, if such a path exists, and add $\pi_{u,u'}$ to $H$. Clearly, $H$ has $O(m^2+n)$ edges. 
We show that $H$ is a 3-spanner of $G$. By construction, for each edge $e=uu'$ in $G[U]$, the distance between $u$ and $u'$ in $H$ is one. 
Consider an edge $uv$ in $G$ with $u \in U$ and $v \in V$. If $e_v=uv$, then we can go from $u$ to $v$ in 1 hop.  Otherwise, $e_v=u'v$, for some $u'\neq u$. Notice that in that case a $2$-hop path $\pi_{u,u'}$ indeed exists and we can go from $u$ to $v$ in 3 hops by concatenating $\pi_{u,u'}$ with $u'v$.
\end{proof}

The above lemma allows us to almost immediately recover the standard $O(n^{3/2})$ upper bound on 3-hop spanners for general graphs (the greedy algorithm yields the same bound~\cite{althofer1993sparse}):

\begin{corollary}\label{cor:easy:3hop}
Every graph $G=(V,E)$ with $n$ vertices has a 3-hop spanner of size $O(n^{3/2})$.
\end{corollary}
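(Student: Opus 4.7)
The plan is to adapt the construction of \lemmaref{lem:easy:3hop} to general graphs by combining a degree-threshold argument with a greedy hub selection. Set $t = \sqrt{n}$, and split $V$ into low-degree vertices $L = \{v : \deg(v) < t\}$ and high-degree vertices $H = V \setminus L$. First include all edges incident to $L$ in the spanner; this contributes at most $|L| \cdot t \leq n^{3/2}$ edges, and already spans (at distance $1$) every edge of $G$ with at least one low-degree endpoint. Next, construct a set $U \subseteq V$ greedily so that every vertex in $H$ is either in $U$ or has a neighbor in $U$: iteratively add to $U$ any currently uncovered vertex of $H$. Since such a vertex brings in at least $\sqrt{n}+1$ newly covered vertices, a double-counting argument yields $|U| = O(\sqrt{n})$.

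With this $U$ in hand, mimic the construction from the proof of \lemmaref{lem:easy:3hop}: for each $v \in H \setminus U$, add one edge $e_v$ from $v$ to a neighbor in $U$ (guaranteed to exist by Step~2); add all of $G[U]$; and for each pair $u,u' \in U$ that are joined by a $2$-hop path in $G$, add one such path. These additions contribute $O(|U|^2 + n) = O(n)$ further edges, so the total size of the spanner is $O(n^{3/2})$.

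The $3$-hop spanner property is verified by case analysis mirroring \lemmaref{lem:easy:3hop}. Edges with a low-degree endpoint are already in the spanner as direct edges. Edges with at least one endpoint in $U$ are handled exactly as in \lemmaref{lem:easy:3hop}, by concatenating the chosen $e_v$'s with edges of $G[U]$ and the $2$-hop paths in $G$ between pairs of $U$-vertices. The case I expect to be the main obstacle is an edge $(a,b)$ with both $a,b \in H \setminus U$: here one uses the representatives $u_a, u_b \in U$ provided by Step~2 to try to form a $3$-hop walk $a \to u_a \to \cdots \to u_b \to b$, with the middle segment coming from either $G[U]$ (if $u_a u_b \in E$) or a $2$-hop $G$-path between $u_a$ and $u_b$. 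Ensuring such a middle segment always exists may require a slight strengthening of Step~3, for example by including, for each high-degree vertex $v \notin U$, all edges from $v$ to its neighbors in $U$; the extra cost is at most $\sum_{u \in U} \deg(u) \leq |U| \cdot n = O(n^{3/2})$, which preserves the target bound.
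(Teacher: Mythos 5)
Your construction has two genuine gaps; you correctly sensed the location of the second one but your proposed repair does not close it. The first gap is in Step~2: the claim that each greedily added vertex ``brings in at least $\sqrt{n}+1$ newly covered vertices'' is false, because the $\geq\sqrt{n}$ neighbors of a newly added uncovered vertex $v\in H$ may all have been covered by earlier choices, so $v$ may newly cover only itself. (Concretely: let all vertices of $H$ share a common neighborhood $N$ of size $\sqrt{n}$ while being pairwise non-adjacent; after the first pick every subsequent pick covers exactly one new vertex, and the greedy set $U$ grows to size $\Omega(n)$ even though $N$ itself dominates $H$.) The fact you need --- a graph of minimum degree $\delta$ has a dominating set of size $O(n\log\delta/\delta)$ --- is true, but it requires the probabilistic or set-cover-greedy argument and carries a logarithmic factor; that factor happens to be harmless here since $|U|^2=O(n\log^2 n)$, so this gap is repairable.

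The second gap is fatal as written: an edge $ab$ with $a,b\in H\setminus U$ is not spanned, and adding all edges between $H\setminus U$ and $U$ does not fix it. If the only $U$-neighbor of $a$ is $u_1$, the only $U$-neighbor of $b$ is $u_2$, and $u_1u_2\notin E$, then every spanner walk from $a$ to $b$ through $U$ has the form $a\to u_1\to(\text{2-hop path})\to u_2\to b$, which has length $5$. The $2$-hop paths $\pi_{u,u'}$ only yield a $3$-hop detour when one endpoint of the original edge already lies in $U$ --- exactly the situation in \Cref{lem:easy:3hop}, where the other side is independent --- and no amount of extra $H$-to-$U$ edges changes this; you would instead need a structurally different rule (e.g., the Baswana--Sen-style ``for each vertex and each cluster it touches, add one edge into that cluster''). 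The paper sidesteps all of this with a much simpler reduction: partition $V$ into $\sqrt{n}$ groups $V_i$ of size $\sqrt{n}$, and for each $i$ apply \Cref{lem:easy:3hop} to the graph $G_i$ consisting of the edges inside $V_i$ together with the edges between $V_i$ and $V\setminus V_i$ (so $V\setminus V_i$ is an independent set in $G_i$ by construction). Every edge of $G$ lies in some $G_i$, and the union of the $\sqrt{n}$ spanners, each of size $O((\sqrt{n})^2+n)=O(n)$, gives the $O(n^{3/2})$ bound with no degree thresholding or domination argument at all.
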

\begin{proof}
Partition the vertices $V=\bigcup_{i=1}^k V_i$ into $k= \sqrt{n}$ groups of size $\sqrt{n}$ each, and apply \Cref{lem:easy:3hop} to each of the graphs $G_i=(V,E_i)$ where $E_i = E(G[V_i]) \cup E[V_i,V\setminus V_i]$. Namely, the edges in each graph comprise the inter-group edges and the bipartite edges between the group and its complement. Take the union of the spanners, with combined size $O(\sqrt{n}\cdot ((\sqrt{n})^2 + n))=O(n^{3/2})$. It is easy to check that the resulting graph is indeed a 3-hop spanner as any original edge belongs to at least one of the graphs $G_i$ as it is either induced by the $U$ part of one of the graphs or a bipartite edge in one of the graphs. 
\end{proof}

\subsection{Semi-algebraic graphs}
Here we improve the general bound $O(n^{3/2})$ for 3-hop spanners for semi-algebraic graphs.
A natural approach is to apply known geometric divide-and-conquer and partitioning techniques to obtain biclique covers of sub-quadratic size, which then yields sub-quadratic upper bounds for 3-hop spanners.
Unfortunately, for many higher-dimensional families of objects, the biclique cover bound can be much worse than $O(n^{3/2})$.
Nevertheless, we show that these divide-and-conquer techniques can work synergistically
with \Cref{lem:easy:3hop} to beat $O(n^{3/2})$: Specifically, we stop the recursion early when subproblems become sufficiently lop-sided and it is advantageous to switch to the $O(m^2+n)$ bound.
In order to apply a divide-and-conquer we apply the polynomial partitioning technique  as follows. 
\begin{theorem}
(Guth and Katz~\cite[Theorem~4.1]{GuthKatz2015})
\label{thm:GK15}
Let $P$ be a set of $m$ points in $\mathbb{R}^D$ for a constant $D\in \mathbb{N}$. For every parameter $1<k<m$, there exists a polynomial $f\in \mathbb{R}[x_1,\ldots , x_D]$ of degree $\deg(f)\in O(k^{1/D})$
such that every connected component of $\mathbb{R}^d\setminus Z(f)$ contains at most $m/k$ points.  
\end{theorem}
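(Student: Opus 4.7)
The plan is to derive \Cref{thm:GK15} from the classical \emph{polynomial ham-sandwich theorem} of Stone and Tukey through an iterated halving argument. Recall that the Stone--Tukey theorem says: given $N$ finite point sets $S_1,\ldots,S_N$ in $\mathbb{R}^D$, there exists a nonzero polynomial $g\in\mathbb{R}[x_1,\ldots,x_D]$ whose zero set simultaneously bisects every $S_i$ in the sense that both open regions $\{g>0\}$ and $\{g<0\}$ contain at most $|S_i|/2$ points of $S_i$. The degree bound is obtained by identifying polynomials of degree at most $t$ with vectors in $\mathbb{R}^{\binom{D+t}{D}}$, passing to the unit sphere in that space, and invoking the Borsuk--Ulam theorem on the antipodal map $g\mapsto -g$; to handle $N$ sets one needs $\binom{D+t}{D}-1\ge N$, which is achieved by some $t=O(N^{1/D})$.

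Now to prove \Cref{thm:GK15}, I would build the polynomial $f$ by successive halving. First I set $\mathcal{P}_0:=\{P\}$ and pick $j_0=\lceil \log_2 k\rceil$. At step $j$, given a family $\mathcal{P}_j$ of at most $2^j$ disjoint subsets, apply the polynomial ham-sandwich theorem to all of them simultaneously to obtain a polynomial $f_j$ of degree $O((2^j)^{1/D})$ whose zero set bisects each set in $\mathcal{P}_j$. Split every set in $\mathcal{P}_j$ into its portions in $\{f_j>0\}$ and $\{f_j<0\}$ (the points on $Z(f_j)$ can be discarded for accounting purposes, since they will anyway lie in the zero set of the final product polynomial) to obtain $\mathcal{P}_{j+1}$ of cardinality at most $2^{j+1}$, each set of cardinality at most $m/2^{j+1}$. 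Iterate until $j=j_0$, and finally define
\[
f \;:=\; \prod_{j=0}^{j_0} f_j .
\]

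It remains to verify the two required properties. For the degree, note that
\[
\deg f \;=\; \sum_{j=0}^{j_0} \deg f_j \;=\; \sum_{j=0}^{j_0} O\!\left(2^{j/D}\right) \;=\; O\!\left(2^{j_0/D}\right) \;=\; O(k^{1/D}),
\]
since the sum is a geometric series dominated by its last term (this uses $D\ge 1$ so the ratio $2^{1/D}$ exceeds~$1$). For the partitioning property, observe that any connected component $C$ of $\mathbb{R}^D\setminus Z(f)$ avoids $Z(f_j)$ for every $j$, so on $C$ the sign of each $f_j$ is constant. Hence $C$ is contained in a single cell of the common sign pattern of $(f_0,\ldots,f_{j_0})$, and tracking the halving shows that such a cell meets $P$ in at most $m/2^{j_0+1}\le m/k$ points.

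The main obstacle in the writeup is not the inductive halving argument itself, which is standard once the Stone--Tukey theorem is available, but rather the careful accounting of degrees and of the points that happen to land on intermediate zero sets $Z(f_j)$. One needs to ensure that these boundary points do not sabotage the final cell bound; the cleanest way is to note that any boundary point lies in $Z(f)$ and therefore outside every connected component of $\mathbb{R}^D\setminus Z(f)$, so it contributes nothing to the count. A secondary, more delicate point is the appeal to Borsuk--Ulam for the degree estimate in Stone--Tukey, but since the statement is quoted as a black box from~\cite{GuthKatz2015}, I would simply cite it rather than reprove it.
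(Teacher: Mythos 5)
Your proposal is correct: the iterated polynomial ham-sandwich (Stone--Tukey) halving argument, taking the product of the bisecting polynomials and bounding the degree by a geometric series, is exactly the standard proof of \Cref{thm:GK15}. The paper itself offers no proof of this statement---it is quoted as a black box from Guth and Katz---and your argument faithfully reproduces the proof given in that cited source, including the correct handling of points that land on intermediate zero sets (they lie in $Z(f)$ and so never appear in any connected component of the complement).
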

This result generalizes to points lying in an irreducible variety. 
\begin{theorem} 
(Fox et al.~\cite[Theorem~4.3]{FPSSZ17})
\label{thm:Fox17}
Let $P$ be a set of $m$ points in a $d$-dimensional irreducible variety $V\subseteq \mathbb{R}^D$ for constant $1\leq d\leq D\in \mathbb{N}$. For every parameter $1<k<m$, there exists a polynomial $f\in \mathbb{R}[x_1,\ldots , x_D]$ of degree $\deg(f)\in O(k^{1/d})$ such that $f$ is not in the (prime) ideal $I(V)$ and every connected component of $\mathbb{R}^d\setminus Z(f)$ contains at most $m/k$ points.  
\end{theorem}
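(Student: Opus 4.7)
The plan is to adapt Guth and Katz's iterative polynomial partitioning (\theoremref{thm:GK15}) to the variety setting, where the key change is that the relevant dimension count is the intrinsic dimension $d$ of $V$, not the ambient dimension $D$. The engine will again be the polynomial ham sandwich theorem, but applied inside the coordinate ring of $V$.

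First I would pin down the variant of ham sandwich that is needed. The classical statement gives a polynomial of degree $\le e$ in $\mathbb{R}[x_1,\ldots,x_D]$ that simultaneously bisects any $\binom{e+D}{D}-1$ finite point sets, via Borsuk--Ulam applied to the Veronese map. To obtain the sharper $O(k^{1/d})$ bound, I would work instead in the graded quotient $R_V := \mathbb{R}[x_1,\ldots,x_D]/I(V)$. Because $V$ is irreducible of dimension $d$, its Hilbert function $\dim_{\mathbb{R}}(R_V)_{\le e}$ grows like $\Theta(e^d)$ once $e$ exceeds a threshold depending only on $V$. Selecting a nonzero class in the degree-$\le e$ piece of $R_V$ and lifting it to a polynomial $f\in\mathbb{R}[x_1,\ldots,x_D]$ produces, by construction, a polynomial that does not vanish identically on $V$, i.e.\ $f\notin I(V)$. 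Running the Borsuk--Ulam argument inside this quotient then shows that any $\Theta(e^d)-1$ finite subsets of $V$ can be simultaneously bisected by such an $f$.

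Next I would iterate the bisection in the standard Guth--Katz pattern. At stage $j$ the point set is split into $2^{j-1}$ pieces of size at most $m/2^{j-1}$; I apply the above variant of ham sandwich to produce a polynomial $f_j\notin I(V)$ of degree $e_j=O(2^{j/d})$ that halves each piece. After $r:=\lceil\log_2 k\rceil$ stages I set $f:=f_1f_2\cdots f_r$. Its total degree is
\[
\deg f \;=\; \sum_{j=1}^r e_j \;=\; O\!\bigl(\sum_{j=1}^r 2^{j/d}\bigr) \;=\; O(2^{r/d}) \;=\; O(k^{1/d}),
\]
and $f\notin I(V)$ because $I(V)$ is prime (as $V$ is irreducible) and none of its factors lie in it. On any connected component of $V\setminus Z(f)$ the sign vector of $(f_1,\ldots,f_r)$ is constant, so the component is contained in one of the $2^r$ terminal pieces and hence contains at most $m/2^r\le m/k$ points of $P$.

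The main obstacle I anticipate is making the Hilbert-function bound effective at every scale: the $\Theta(e^d)$ growth only kicks in past the Castelnuovo--Mumford regularity of $V$, which depends on $V$. I would handle the first $O(1)$ stages of the recursion by hand (they contribute only $O(1)$ to $\deg f$, absorbed into the final constant) and invoke the asymptotic from that point on. A secondary subtlety is guaranteeing that the ham sandwich lift stays outside $I(V)$ at every stage; this is automatic because the bisecting class is chosen to be nonzero \emph{in the quotient} $R_V$ rather than in $\mathbb{R}[x_1,\ldots,x_D]$, and primality of $I(V)$ then ensures the final product $f$ is likewise outside $I(V)$.
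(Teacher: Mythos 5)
The paper does not prove this statement—it is quoted verbatim from Fox et al.\ \cite{FPSSZ17}—and your argument is a correct reconstruction of the proof given there: a polynomial ham sandwich theorem run in the coordinate ring $\mathbb{R}[x_1,\ldots,x_D]/I(V)$, whose degree-$\le e$ piece has dimension $\Theta(e^d)$ by the Hilbert polynomial of the $d$-dimensional variety, followed by the Guth--Katz halving iteration with the geometric degree sum $\sum_j O(2^{j/d})=O(k^{1/d})$, with primality of $I(V)$ keeping the product outside the ideal. Your handling of the two subtleties (the regularity threshold before the $\Theta(e^d)$ asymptotics kick in, and choosing the bisecting class nonzero in the quotient) is exactly right; the only cosmetic remark is that the sign-vector argument gives the conclusion for connected components of the ambient complement $\mathbb{R}^D\setminus Z(f)$ (the ``$\mathbb{R}^d$'' in the paper's statement is a typo for $\mathbb{R}^D$), not merely of $V\setminus Z(f)$.
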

\begin{corollary}\label{cor:polypartition}
    Let $P$ be a set of $m$ points in a $D$-dimensional variety $V\subset \mathbb{R}^D$ for constants $1\leq d\leq D\in \mathbb{N}$. For every parameter $1<k<m$, there exists a partition $P=\bigcup_{i=1}^q P_i$ into $q=O(k)$ sets such that $|P_i|\leq m/k$; and for every $i=1,\ldots q$, there is a connected set $\Delta_i\subset \mathbb{R}^d$, called a \emph{cell}, such that $P_i\subset \Delta_i$ and the zero set $Z(h)$ of any polynomial of bounded degree crosses\footnote{A set $S$ \emph{crosses} a cell $\Delta$ if both $\Delta\cap S$ and $\Delta\setminus S$ are nonempty.} $O(k^{1-1/D})$ cells (consequently, any semi-algebraic set crosses $O(k^{1-1/D})$ cells).
\end{corollary}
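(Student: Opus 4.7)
The plan is to derive Corollary~\ref{cor:polypartition} from Theorems~\ref{thm:GK15} and~\ref{thm:Fox17} by an iterative polynomial partitioning that absorbs the points lying on the successive zero sets via dimension reduction, and then to bound cell crossings by a Milnor--Thom estimate applied to an intersection variety.

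At the top level I would apply Theorem~\ref{thm:GK15} to $P$ in the $D$-dimensional ambient variety $V=\mathbb{R}^D$ with parameter $k$, yielding a polynomial $f$ of degree $r=O(k^{1/D})$ such that every connected component of $\mathbb{R}^D\setminus Z(f)$ contains at most $m/k$ points of $P$; Milnor--Thom bounds the number of such components by $O(r^D)=O(k)$, and each becomes a cell $\Delta_i$ together with the points $P_i$ that it contains. The residual points $P\cap Z(f)$ are handled by recursion on the sub-variety $V'=Z(f)\cap V$, whose dimension is strictly smaller: decomposing $V'$ into its finitely many irreducible components and invoking Theorem~\ref{thm:Fox17} on each produces another $O(k)$ cells of capacity $m/k$. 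Iterating at most $D$ times suffices, since the dimension strictly decreases at each step; the total cell count therefore remains $q=O(k)$ with $|P_i|\le m/k$ throughout.

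For the crossing bound, fix a polynomial $h$ of constant degree. Any top-level cell $\Delta_i$ that is crossed by $Z(h)$ must contain at least one connected component of $Z(h)\setminus Z(f)$, so the number of crossed top-level cells is at most the number of connected components of $Z(h)\setminus Z(f)$. This count is controlled by a Milnor--Thom bound applied to the restriction $f|_{Z(h)}$ on the $(D-1)$-dimensional variety $Z(h)$ of constant degree: it is $O(r^{D-1})=O(k^{1-1/D})$. At each lower recursive level the ambient variety has smaller dimension, so the analogous counts are of strictly lower order in $r$ and do not affect the overall estimate. Extending from a single polynomial to an arbitrary semi-algebraic set of bounded description complexity is then a union bound over its constantly many defining polynomials.

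The main obstacle is the bookkeeping in the recursion on $Z(f)\cap V$: Theorem~\ref{thm:Fox17} is stated for irreducible varieties, so one has to decompose the possibly reducible intersection into its irreducible components, apply the theorem component by component, and verify that the cell counts summed over components and recursion levels still total $O(k)$ while the crossings summed across levels remain dominated by the top-level bound. These accounting steps are by now standard in the polynomial partitioning literature, but they are where the real care is required.
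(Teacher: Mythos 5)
Your proposal is correct and follows essentially the same route as the paper: iterated polynomial partitioning with the residual points on $Z(f)$ handled by decomposing $Z(f)\cap V$ into irreducible components and recursing on the lower-dimensional varieties via Theorem~\ref{thm:Fox17}, with Milnor--Thom bounding the cell count and the crossing number (the paper cites this last bound from Solymosi--Tao rather than rederiving it, but the argument is the one you give). The only cosmetic difference is that the paper phrases the recursion as an induction on the dimension $d$ of the ambient irreducible variety and uses Theorem~\ref{thm:Fox17} uniformly at every level, whereas you invoke Theorem~\ref{thm:GK15} at the top level; this changes nothing.
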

\begin{proof}
For constant $D\in \mathbb{N}$, we proceed by induction on $d$. In the base case, $d=0$, is obvious. 
In the indiction step, assume that $d>1$. \Cref{thm:Fox17} yields a polynomial $f\in \mathbb{R}[x_1,\ldots , x_D]$ of degree $\deg(f)\in O(k)$ such that every connected component of $\mathbb{R}^d\setminus Z(f)$ contains at most $m/k$ points. Let every connected component of $\mathbb{R}^d\setminus Z(f)$ be a cell $\Delta_i$ with $P_i=P\cap \Delta_i$. By the Milnor–Thom theorem~\cite{BPR06,Mil64,Tho65}, the number of such cells is $O(k)$, and the zero set of every constant-degree polynomial crosses $O(k^{1-1/D})$ cells~\cite[Theorem~A.2]{ST12}. 
It remains to partition the points in $P\cap Z(f)$. Since $V$ is irreducible and $f$ is not in the ideal of $V$, then $Z(f)\cap V$ is $d'$-dimensional for some $d'<d$. Furthermore, $Z(f)\cap V$ can be written as a union of $O(1)$ irreducible varieties $Z(f)\cap V=V_1\cup \ldots \cup V_t$, each of degree at most $\deg(f)$ \cite{FPSSZ17,GV95}. Associate each variety $V_j$ with the point set $P(j)=P\cap  (V_j\setminus \bigcup_{j'<j} V_{j'})$. By induction, each set $P(j)$ admits a partition into subsets $P_i$ of size at most $m/k$ and corresponding cells $P_i\subset \Delta_i\subset V_j$. Overall, we have partitioned $P$ into $O(k)$ sets, where the constant of proportionality depends on $d$ and $D$.
\end{proof}
\begin{remark}\label{rem:algorithmic}\rm
\Cref{cor:polypartition} uses a decomposition of the zero set of a partitioning polynomial into a union of irreducible varieties. For the algorithmic problem of constructing a partition $P=\bigcup_{i=1}^q P_i$ and corresponding cells $\Delta_i$, one can avoid computing irreducible components with a multilevel partition; see~\cite{MatousekP15} or \cite{AAEKS25}.
\end{remark}

\begin{theorem}\label{thm:3hop}
Let $U$ and $V$ be two sets of $n$ elements.
We are given points $p_1(u),\ldots,p_\ell(u)\in {\mathbb R}^D$ for each $u\in U$ and semi-algebraic sets $S_1(v),\ldots,S_\ell(v)\subset {\mathbb R}^D$ of constant
description complexity for each $v\in V$, where $D$ and $\ell$ are constants.
For any $\ell$-variate Boolean formula $\Phi$, the graph 
 \[
 G_\Phi[U,V]\ :=\ (U\cup V,\ \{(u,v)\in U\times V:\ \Phi([p_1(u)\in S_1(v)],\ldots, [p_\ell(u)\in S_\ell(v)])\})
 \]
has a 3-hop spanner with $O^*(n^{\frac{3D-2}{2D-1}})=O^*(n^{\frac{3}{2}-\frac{1}{2(2D-1)}})$ edges.
\end{theorem}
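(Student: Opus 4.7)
The plan is to apply a polynomial-partitioning divide-and-conquer, using \lemmaref{lem:easy:3hop} to handle the lop-sided subproblems that arise on the cross part, and a star-plus-reverse-star construction to handle the biclique (contain) parts. For clarity I first describe the single-predicate case $\ell = 1$ with $\Phi(x) = x$, so an edge $(u,v)$ denotes $p_1(u) \in S_1(v)$; the general case follows by processing each Boolean sign pattern of $([p_j(u) \in S_j(v)])_{j=1}^{\ell}$ separately, at only an $O(1)$ blow-up.

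First, apply \corollaryref{cor:polypartition} to the $n$ points $\{p_1(u) : u \in U\}$ in $\R^D$ with a parameter $k$ to be chosen. This produces $O(k)$ cells $\Delta_t$ whose $U$-parts $U_t := \{u : p_1(u) \in \Delta_t\}$ satisfy $|U_t| \leq n/k$, and each semi-algebraic set $S_1(v)$ crosses at most $O(k^{1-1/D})$ cells. For every cell $\Delta_t$, split $V$ into a cross part $V_t^{\mathrm{cr}}$ (those $v$ whose boundary $\partial S_1(v)$ meets $\Delta_t$), a contain part $V_t^{\mathrm{co}}$ (those $v$ with $\Delta_t \subseteq S_1(v)$), and a miss part (contributing no edges in $\Delta_t$).

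For the cross part, apply \lemmaref{lem:easy:3hop} to the bipartite subgraph on $(U_t, V_t^{\mathrm{cr}})$, yielding a 3-hop spanner of size $O(|U_t|^2 + |V_t^{\mathrm{cr}}|)$ per cell. Since $\sum_t |V_t^{\mathrm{cr}}| = O(n k^{1-1/D})$, summing over cells costs $O(n^2/k + n k^{1-1/D})$ edges. For the contain part, each cell induces a complete bipartite graph $U_t \times V_t^{\mathrm{co}}$, which I 3-span using $O(|U_t| + |V_t^{\mathrm{co}}|)$ edges via a star-plus-reverse-star construction with representatives $u^*_t \in U_t$ and $v^*_t \in V_t^{\mathrm{co}}$ (adding $(u, v^*_t)$ for all $u \in U_t$, $(u^*_t, v)$ for all $v \in V_t^{\mathrm{co}}$, and $(u^*_t, v^*_t)$). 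Choosing $k = n^{D/(2D-1)}$ balances the cross contributions to $O(n^{(3D-2)/(2D-1)})$.

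I expect the main obstacle to be bounding the biclique contribution $\sum_t |V_t^{\mathrm{co}}|$, which in the worst case can be as large as $\Omega(nk)$ (e.g., when many $S_1(v)$ contain most cells) and would blow past the target bound. To handle this, I would recursively apply polynomial partitioning on the dual parameter space of the $v$'s whose $S_1(v)$ contain many cells, amortizing the biclique cost by a multi-level partitioning scheme in the spirit of Matousek--Patakova or Fox--Pach--Sheffer--Suk--Zahl (see \Cref{rem:algorithmic}). The resulting $n^{\epsilon}$ overhead is absorbed into the $O^*$ notation and yields the claimed spanner size $O^*(n^{(3D-2)/(2D-1)})$.
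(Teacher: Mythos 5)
Your high-level strategy (polynomial partitioning plus \lemmaref{lem:easy:3hop} for the lop-sided cross parts) matches the paper's, and your cross-part calculation is correct: with $O(k)$ cells of $\le n/k$ points each and total crossing number $O(nk^{1-1/D})$, the choice $k=n^{D/(2D-1)}$ balances $n^2/k$ against $nk^{1-1/D}$ at $n^{(3D-2)/(2D-1)}$. But the obstacle you flag at the end is a genuine gap, not a technicality, and the fix you sketch points in the wrong direction. With a single partitioning level, $\sum_t|V_t^{\mathrm{co}}|$ really can be $\Theta(nk)=\Theta(n^{(3D-1)/(2D-1)})$, and partitioning the $v$'s in a dual parameter space does not help: each $S_1(v)$ can still fully contain $\Omega(k)$ cells, so $v$ appears in $\Omega(k)$ contain-lists no matter how the $v$'s themselves are grouped. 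The paper's resolution is hierarchical and stays in the primal: it partitions with a \emph{constant} parameter $r$ into $O(r^D)$ cells, splits each crossing list $V_\Delta$ into $O(|V_\Delta|r/n)$ chunks of size at most $n/r$, and recurses on each (cell, chunk) pair, switching to \lemmaref{lem:easy:3hop} only at the level $k$ where $(m/r^{Dk})^2\approx n/r^k$. Because a set $v$ reaches a cell at level $i$ only if its boundary crossed the cells along the entire root-to-node path, the contain-lists at level $i$ have total size $O(r^{Di}\cdot n/r^i)=O(nr^{(D-1)i})$, which at the terminal level matches the crossing term; this is exactly the amortization you need and cannot obtain in one shot.

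A second gap: your claimed reduction of general $\ell$ to $\ell=1$ ``by processing each sign pattern separately at $O(1)$ blow-up'' does not go through. A satisfying sign pattern of $\Phi$ is a conjunction of $\ell$ (possibly negated) predicates, which is still an $\ell$-predicate problem; collapsing it to a single predicate by concatenating $(p_1(u),\dots,p_\ell(u))$ into $\R^{\ell D}$ would degrade the exponent from $\frac{1}{2(2D-1)}$ to $\frac{1}{2(2\ell D-1)}$. The paper instead peels off one predicate per recursion: it partitions on the points $p_\ell(u)$ only, and for cells fully contained in (resp.\ disjoint from) $S_\ell(v)$ it recurses on the $(\ell-1)$-variate formulas $\Phi'$, $\Phi''$ obtained by fixing the $\ell$-th variable, with base case $T_0(m,n)=O(m+n)$ for the biclique. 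You would need to incorporate both mechanisms---the constant-$r$ hierarchical recursion and the induction on $\ell$---to complete the proof.
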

\begin{proof}
We consider the more general setting, where $|U|=m$ and $|V|=n$.  Let $T_\ell(m,n)$ denote the worst-case optimal size of a 3-hop spanner.

Let $r$ be a parameter.
By \Cref{cor:polypartition}
\footnote{For $D\le 4$, we may alternatively use traditional \emph{$(1/r)$-cuttings}~\cite{AgarwalM94,Chazelle04,Koltun04} (in fact, later in \Cref{thm:3hop:halfspace},
we will switch back to using cuttings), but for $D\ge 5$, optimal bounds on \emph{vertical decompositions} are open and polynomial partitioning gives better results.
},
we can partition ${\mathbb R}^D$ into $O(r^D)$ cells such that
each cell contains at most $m/r^D$ points of $\{p_\ell(u): u\in U\}$,
and each semi-algebraic set $S_\ell(v)$ crosses $O(r^{D-1})$ cells.
For each cell $\Delta$, let $U_\Delta = \{u\in U: p_\ell(u)\in\Delta\}$,
$V_\Delta = \{v\in V: \partial S_\ell(v)\ \mbox{crosses}\ \Delta\}$, 
$V_\Delta' = \{v\in V: S_\ell(v)\ \mbox{contains}\ \Delta\}$, and 
$V_\Delta'' = \{v\in V: S_\ell(v)\ \mbox{does not intersect}\ \Delta\}$.
Arbitrarily partition $V_\Delta$ into $\left\lceil\frac{|V_\Delta|}{n/r}\right\rceil$ subsets $V_\Delta^{(j)}$ of size at most $n/r$ each, and
recursively construct a 3-hop spanner for $G_\Phi[U_\Delta,V_\Delta^{(i)}]$ for each $V_\Delta^{(i)}$. Since $\sum_\Delta |V_\Delta| = O(nr^{D-1})$,
the number of such recursive calls is $O(r^D)$.
Furthermore, for each cell $\Delta$, recursively construct a 3-hop spanner for $G_{\Phi'}[U_\Delta,V_\Delta']$ and a 3-hop spanner for $G_{\Phi''}[U_\Delta,V_\Delta'']$, where $\Phi'$ (resp., $\Phi''$) is the $(\ell-1)$-variate Boolean formula
obtained by setting the $\ell$-th variable to true (resp., false).
The union of these spanners yields a 3-hop spanner for $G_\Phi[U,V]$.
We thus obtain the following recurrence 
for $\ell\geq 1$: 
\begin{equation}\label{eq:recurrence}
    T_\ell(m,n) \ \le\ O(r^D) \left(T_\ell\left(\frac{m}{r^D}, \frac nr\right) + T_{\ell-1}(m,n)\right).
\end{equation}
For base cases, we have $T_0(m,n) = O(m+n)$ (since a biclique $U\times V$ has a 3-hop spanner of linear size, consisting of two stars, centered at a fixed vertex of $U$ and a fixed vertex of $V$), and we can use the upper bound $T_\ell(m,n) \leq O(m^2+n)$ for every $\ell$ by \Cref{lem:easy:3hop}.

Assume inductively that $T_{\ell-1}(m,n)=O^*(m^{\frac{2D-2}{2D-1}}n^{\frac{D}{2D-1}})$ for
$\sqrt{n}\le m\le n^D$.  Choose $r$ to be a large constant.
Expand the recurrence for $k$ levels so that $(m/r^{Dk})^2 \approx n/r^k$, i.e.,
$r^k\approx (m^2/n)^{\frac1{2D-1}}$.  Then the recurrence~\eqref{eq:recurrence} yields
\begin{align*}
T_\ell(m,n)\ &\le\ \sum_{i=1}^k  O(1)^i O(r^D) O(r^{Di}) \left( \frac{m}{r^{Di}}\right)^{\frac{2D-2}{2D-1}}
\left(\frac{n}{r^i}\right)^{\frac{D}{2D-1}} \ +\ O(1)^k  O(r^{Dk}) \left( \left( \frac{m}{r^{Dk}}\right)^2 +
\frac{n}{r^k} \right)\\
&\le\ 
O(1)^kr^D\cdot \big(m^{\frac{2D-2}{2D-1}}n^{\frac{D}{2D-1}} \ +\  nr^{k(D-1)}\big)\\
&=\  O(1)^kr^D\cdot \big(m^{\frac{2D-2}{2D-1}}n^{\frac{D}{2D-1}} \ +\ n (m^2/n)^{\frac{D-1}{2D-1}}\big)
\ =\ O(1)^{\log_r m}r^D \cdot m^{\frac{2D-2}{2D-1}}n^{\frac{D}{2D-1}}.
\end{align*}
Making $r$ an arbitrarily large constant, this yields $T_{\ell}(m,n)=O^*(m^{\frac{2D-2}{2D-1}}n^{\frac{D}{2D-1}})$ for $\sqrt{n}\le m\le n^D$.
\end{proof}

\Cref{thm:3hop} immediately implies an $o(n^{3/2})$ bounds on 3-hop spanners for intersection graphs of any family of objects with constant description complexity.  As an example, we obtain the following corollary.

\begin{corollary}\label{cor:3hop:simplices}
The intersection graph of every set of $n$ simplices (or polyhedra each of constant complexity) in a constant dimension $d$ has a 3-hop spanner with $O\left(n^{\frac{3}{2} - \frac{1}{O(d^2)}}\right)$ edges.
Specifically, for $d=3$, the bound is $O^*(n^{10/7})$.
\end{corollary}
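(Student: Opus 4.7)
The plan is to invoke \Cref{thm:3hop} by expressing the simplex-intersection predicate in its framework: write the relation ``$\sigma_u\cap \sigma_v\neq\emptyset$'' as a Boolean combination of $O(1)$ atomic predicates $[p_i(u)\in S_i(v)]$ with $p_i(u)\in \R^D$ and $S_i(v)\subset \R^D$ semi-algebraic of constant description complexity, and make the ambient dimension $D$ as small as possible, since the exponent $\frac{3}{2}-\frac{1}{2(2D-1)}$ shrinks with $D$.

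For the general $d$-dimensional bound, I would use the direct encoding with $\ell=1$: let $p(u)\in \R^{d(d+1)}$ be the concatenation of the $d+1$ vertices of $\sigma_u$, and let $S(v)\subset \R^{d(d+1)}$ be the set of vertex-vectors whose encoded simplex meets $\sigma_v$. Intersection of two simplices is defined by a constant-size Boolean combination of polynomial inequalities of bounded degree in the joint vertex coordinates, so $S(v)$ has constant description complexity. Applying \Cref{thm:3hop} with $D=d(d+1)=O(d^2)$ yields $O^*(n^{\frac{3}{2}-\frac{1}{O(d^2)}})$. For polyhedra of constant complexity, I would first triangulate each polyhedron into $O(1)$ simplices and take the union of the resulting spanners; this preserves the asymptotic bound.

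For the sharper $d=3$ bound of $O^*(n^{10/7})$, which matches exactly $D=4$ in \Cref{thm:3hop}, I would use a finer decomposition. Two tetrahedra intersect iff either (a) some vertex of one lies in the other, or (b) some edge of one properly pierces a triangular face of the other. Part (a) fits neatly into $\R^4$ via projective duality: affine-lift each vertex to $(w_k,1)\in \R^4$, dualize each facet half-space of the other tetrahedron to a point of $\R^4$, and each atomic membership test becomes a linear inequality in $\R^4$. The hard part is encoding (b) within the same $D=4$: the relevant $4\times 4$ determinant sign tests $\operatorname{sign}\det[\widetilde{w}_i,\widetilde{w}_j,\widetilde{z}_k,\widetilde{z}_l]$ are bilinear in the edge's two vertices $(\widetilde{w}_i,\widetilde{w}_j)$, so a naive Plücker encoding of the edge would live in $\R^6$ and yield only the weaker exponent $\frac{16}{11}$. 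To remain in $D=4$, one must decompose each edge-face crossing test into a constant-size Boolean combination of atomic predicates that each involve only one of the two $u$-vertices at a time, exploiting the multilinearity of the determinants and using the fact that \Cref{thm:3hop} accommodates any constant-size Boolean formula $\Phi$ over its atomic predicates. Getting this bookkeeping right, and handling the symmetric edge-of-$v$ versus face-of-$u$ case, is the most delicate step of the argument.
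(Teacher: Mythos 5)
Your general-$d$ argument (encode a simplex by its $d+1$ vertices as a point in $\R^{d(d+1)}$, apply \Cref{thm:3hop} with $D=O(d^2)$, and triangulate constant-complexity polyhedra into $O(1)$ simplices) is exactly the paper's argument and is fine. For $d=3$ you also correctly identify the decomposition into (a) vertex containment and (b) edge--facet crossing, correctly see that the whole game is to realize every atomic predicate with ambient dimension $D\le 4$, and correctly compute that a Pl\"ucker encoding in $\R^6$ only gives $O^*(n^{16/11})$. (Incidentally, case (a) needs no duality: a vertex of $u$ is a point in $\R^3$ and the other tetrahedron is a semi-algebraic set in $\R^3$, so that atom already has $D=3$.)

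The gap is in your resolution of case (b). You propose to ``decompose each edge--face crossing test into atomic predicates that each involve only one of the two $u$-vertices at a time, exploiting the multilinearity of the determinants.'' This cannot work: the sign of $\det[\widetilde{p},\widetilde{q},\widetilde{a},\widetilde{b}]$, as a predicate on the pair $(p,q)$ for fixed $a,b$, is not a finite Boolean combination of predicates depending on $p$ alone or on $q$ alone (the positive region in $(p,q)$-space is not a finite union of product sets), so no amount of bookkeeping produces single-vertex atoms. The missing idea is that the joint dependence on $(p,q)$ need not be split vertex by vertex --- it only needs to factor through a constant number of points in $\R^{\le 4}$. Concretely: segment $pq$ crosses facet $\tau$ iff (i) the \emph{line} through $pq$ meets $\tau$, and (ii) $p$ lies above the plane of $\tau$ while (iii) $q$ lies below it (or vice versa). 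Atom (i) depends on $p,q$ only through the line they span, which is a point in the $4$-dimensional space of lines in $\R^3$, and the lines meeting a fixed triangle form a semi-algebraic set there; atoms (ii) and (iii) are point-in-halfspace tests in $\R^3$. All atoms therefore fit \Cref{thm:3hop} with $D=4$, giving $O^*(n^{10/7})$. The symmetric ``edge of $v$ versus facet of $u$'' subcase that you flag is handled by simply swapping the roles of $U$ and $V$, applying the theorem a second time, and taking the union of the two spanners (a union of $3$-hop spanners of graphs covering $G$ is a $3$-hop spanner of $G$).
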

\begin{proof}
Polyhedra of constant complexity can be decomposed into $O(1)$ simplices.
A simplex in $\R^d$ has $d+1$ vertices, so it can be encoded as a point in ${\mathbb R}^{d(d+1)}$, and the set of all simplices intersecting a given simplex maps to a semi-algebraic set in ${\mathbb R}^{d(d+1)}$. We can thus apply \Cref{thm:3hop} with $D=d^2+d$.

We can decrease $D$ with more care.
For example, for $d=3$, two simplices $s$ and $s'$ intersect if and only if 
(a)~a vertex of $s$ is inside $s'$, or
(b)~an edge $pq$ of $s$ intersects a facet $\tau$ of $s'$, or vice versa.
The toughest case concerns (b), which is equivalent to:
(i)~the line through $pq$ intersects $\tau$, and
(ii)~$p$ lies above the plane through $\tau$, and
(iii)~$q$ lies below the plane through $\tau$, or vice versa.
In (i), the line through $pq$ can be encoded as a point in $\R^4$,
whereas in (ii) or (iii), $p$ or $q$ is already a point in $\R^3$.
We can thus apply \Cref{thm:3hop} with $D=4$ (rather than $d(d+1)=12$). 
\end{proof}

\begin{remark}\rm
Interestingly, the above approach implies some new results on Zarankiewicz's problem: upper-bounding the size of a semialgebraic graph, under the assumption that the graph avoids $K_{c,c'}$ as a subgraph for some constants $c$ and $c'$. For example, if the semialgebraic graph $G_\Phi[U,V]$ in \Cref{thm:3hop} avoids $K_{2,c'}$, then it can have at most $O^*(n^{\frac32 - \frac1{2(2D-1)}})$ edges.  This follows essentially from the same proof: In the
lop-sided case with $|U|=m$ and $|V|=n$ (but $V$ is not necessarily an independent set), $O(m^2+n)$ is an easy upper bound on the number of edges of $K_{2,c'}$-free graphs, which we can use in place of \Cref{lem:easy:3hop}.  
More generally, if $G_\Phi[U,V]$ avoids $K_{c,c'}$ with $c\le c'$,
then we can extend the above proof to obtain an upper bound of $O^*(n^{2-\frac1c - \frac{(c-1)^2}{c(cD-1)}})$ on the number of 
edges (using an $O(m^c+n)$ upper bound in the lop-sided case).  
Numerous papers have recently studied Zarankiewicz's problem for geometric graphs, 
but mainly focused on results where the exponents are independent of $c$ (see, e.g., the recent survey paper \cite{smoro2024survey} and the references therein); these upper bounds
are effective only when $c$ is sufficiently large relative to $D$, whereas our bound improves over the general graph bound 
of $O(n^{2-\frac1c})$ for all constant $c\ge 2$.
\end{remark}

\begin{remark}\rm
It is tempting to extend the above approach to larger numbers of hops.  For $2k-1$ hops,
\Cref{lem:easy:3hop} can be replaced by the bound $O(m^{\frac{\up{(k+1)/2}}k}n^{\frac{\down{(k+1)/2}}k} + n+m)$
(by taking a maximal/greedy $(2k-1)$-hop spanner and applying a known combinatorial bound on the size of graphs with girth at least $2k$~\cite{extremal_survey}),
but the above recursion would not provide any improvement over this bound for $k>2$, unfortunately.
\end{remark}

\subsection{Improvement for point-halfspace incidence graphs}

\newcommand{\SSS}{\widehat{S}}

For incidence graphs between points and halfspaces, we can improve the exponent by switching to shallow cuttings~\cite{Mat92} for the geometric divide-and-conquer:

\begin{theorem}\label{thm:3hop:halfspace}
Let $P$ be a set of $n$ points and $S$ be a set of $n$ halfspaces in a constant dimension $D$.
Then their incidence graph $G[P,S] := (P\cup S,\ \{(p,s)\in P\times S: p\in s\})$ has a 3-hop spanner with 
$O^*\left(n^{\frac{3\down{D/2}-2}{2\down{D/2}-1}}\right)
=O^*\left(n^{\frac{3}{2}-\frac{1}{2(2\down{D/2}-1)}}\right)$ edges.
\end{theorem}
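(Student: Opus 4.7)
The plan is to mirror the divide-and-conquer scheme of \theoremref{thm:3hop}, but with the polynomial partition of \corollaryref{cor:polypartition} replaced by a sharper partition available specifically for halfspaces. For $m$ points in $\R^D$ and a parameter $1<q<m$, Matou\v{s}ek's partition theorem for halfspaces---based on shallow cuttings and ultimately exploiting the upper bound theorem for convex polytopes---produces a partition into $O(q)$ simplicial cells, each containing at most $m/q$ points, such that every halfspace crosses only $O\!\left(q^{\,1-1/\down{D/2}}\right)$ cells. Setting $q=r^{\down{D/2}}$ for a large constant $r$ yields $O(r^{\down{D/2}})$ cells, each with at most $m/r^{\down{D/2}}$ points of $P$, and each halfspace in $S$ crossing at most $O(r^{\down{D/2}-1})$ of the cells.

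For each cell $\Delta$, partition $S$ into the halfspaces that are disjoint from $\Delta$ (contributing no incidences), those that contain $\Delta$ (inducing a complete bipartite subgraph with $P\cap\Delta$, for which we add a trivial linear-size $3$-hop spanner consisting of two stars), and those that cross $\Delta$. We then split the crossing halfspaces into groups of size at most $n/r$ and recurse on each resulting pair $(P\cap\Delta, S^{(j)}_\Delta)$. Since the total number of crossings is $O(n\,r^{\down{D/2}-1})$, the total number of recursive subproblems across all cells is $O(r^{\down{D/2}})$. Writing $d=\down{D/2}$ for brevity, this yields the recurrence
\[
T(m,n)\ \le\ O(r^{d})\cdot T\!\left(\tfrac{m}{r^{d}},\,\tfrac{n}{r}\right)\ +\ O(r^{d})(m+n),
\]
with base case $T(m',n')\le O(m'^{\,2}+n')$ from \lemmaref{lem:easy:3hop}, applicable since the halfspaces form an independent set in the bipartite incidence graph.

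Expanding this recurrence to depth $k$ and balancing $(m/r^{dk})^{2}\approx n/r^{k}$, i.e.\ $r^{k}\approx (m^{2}/n)^{1/(2d-1)}$, the dominant contribution is
\[
O(r^{dk})\cdot O(n/r^{k})\ =\ O\!\left(n\cdot (m^{2}/n)^{(d-1)/(2d-1)}\right)\ =\ O\!\left(m^{\frac{2d-2}{2d-1}}\,n^{\frac{d}{2d-1}}\right),
\]
while the lower-order contributions from the containment and disjoint buckets add only an $O^{*}(1)$ factor. Setting $m=n$ gives the claimed $O^{*}\!\left(n^{(3d-2)/(2d-1)}\right)=O^{*}\!\left(n^{3/2-1/(2(2d-1))}\right)$. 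The argument is structurally parallel to the proof of \theoremref{thm:3hop}, with every occurrence of the polynomial-partitioning exponent $D$ replaced by $\down{D/2}$. The only substantive ingredient beyond \theoremref{thm:3hop} is the halfspace partition theorem itself; the recursion analysis then goes through by essentially the same geometric-series bookkeeping, so the main thing to get right is simply invoking the correct shallow-cutting-based partition and tracking the new exponent through the balancing step.
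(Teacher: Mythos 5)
There is a genuine gap: the partition theorem you invoke does not exist in the form you state it. A simplicial partition of $m$ points in $\R^D$ into $O(q)$ classes in which \emph{every} halfspace crosses only $O\bigl(q^{1-1/\down{D/2}}\bigr)$ cells is impossible for $D\ge 3$; the optimal crossing number for arbitrary hyperplanes/halfspaces is $\Theta\bigl(q^{1-1/D}\bigr)$ (this is exactly why linear-space halfspace range \emph{counting} has query time $n^{1-1/D}$ and not $n^{1-1/\down{D/2}}$). The exponent $\down{D/2}$, which comes from the upper bound theorem, is only available for \emph{shallow} halfspaces -- those containing at most $m/q$ points -- via Matou\v sek's shallow partition theorem or his shallow cutting lemma. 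Since your entire improvement over \theoremref{thm:3hop} rests on this crossing number, the proof as written does not go through: with the true crossing bound $O(q^{1-1/D})$ you would only recover the exponent of \theoremref{thm:3hop}.

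The paper's proof therefore has two substantive ingredients that your proposal omits. First, the deep halfspaces (those containing at least $m/t$ points) are handled separately before any cutting is applied: a $(1/t)$-net $N\subset P$ of size $O(t\log t)$ is chosen, each deep halfspace $s$ gets one edge to a net point $p_s\in N$, and a $2$-hop path $\pi_{p,p_s}$ is stored for every pair in $P\times N$, so every incidence $(p,s)$ with $s$ deep is realized in $3$ hops at a cost of $O(n+mt\log t)$ edges. Only the remaining shallow halfspaces enter the divide-and-conquer. Second, because the paper then uses a shallow \emph{cutting} (whose $O(r^{\down{D/2}})$ cells cover only the $(\le |\SSS|/r)$-level rather than all of $\R^D$), the points of $P$ left uncovered must also be dealt with; their number is bounded by $rm/t\le m/r^{\down{D/2}}$ via a counting argument on incidences with shallow halfspaces, and they are fed back into the recursion against $O(r)$ groups of halfspaces. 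Neither step is mere bookkeeping -- without the first, the crossing-number bound you need is false, and without the second (or a switch to the shallow \emph{partition} theorem, which would still require the first step), the constructed subgraph is not even a spanner of the whole incidence graph.
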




\begin{proof}
We will assume that all halfspaces of $S$ are upper halfspaces (since we can handle the lower halfspaces separately and take 
the union of the two 3-hop spanners).
We will consider the more general setting when $|P|=m$ and $|S|=n$.  Let $T(m,n)$ denote the worst-case optimal 3-hop spanner size.

Let $r$ and $t$ be parameters.
We first construct a \emph{$(1/t)$-net} $N\subset P$ of size $O(t\log t)$, i.e., a subset of $P$ such that every \emph{$(m/t)$-deep} halfspace (i.e., a halfspace containing at least $m/t$ points of $P$) contains at least one point of $N$~\cite{HausslerW87}. We construct a 3-hop spanner recursively. For each $(m/t)$-deep halfspace $s\in S$, pick an arbitrary point $p_s\in N$ contained in $s$ and add the edge $p_s s$ to the spanner.
Furthermore, for each pair of vertices $(p,p')\in P\times N$, pick an arbitrary 2-hop path $\pi_{p,p'}$ from $p$ to $p'$ in $G[P,S]$, if such a path exists,
and add $\pi_{p,p'}$ to the spanner. So far, the number of edges in the spanner is $O(n+ mt\log t)$. For every containment pair $(p,s)\in P\times S$, where $s$ is $(n/t)$-deep, we can go from $p$ to $s$ in 3 hops by concatenating $\pi_{p,p_s}$ with $p_s s$. 

Thus, it remains to consider the subset $\SSS$ of all \emph{$(m/t)$-shallow} halfspaces (i.e., halfspaces containing less than $m/t$ points of $P$). 
By the \emph{shallow cutting lemma} of Matou\v sek~\cite{Mat92},
we can form $O(r^{\down{D/2}})$ simplicial cells such that
each cell is crossed by at most $|\SSS|/r$ of the bounding hyperplanes of $\SSS$,
and the cells cover the \emph{$(\le |\SSS|/r)$-level} (i.e., all points $q$ that have at most $|\SSS|/r$ hyperplanes of $\SSS$ below $q$).
Further subdivide the cells so that each cell contains at most $m/r^{\down{D/2}}$ points of $P$; the number of cells remains $O(r^{\down{D/2}})$.
For each cell $\Delta$, let
$P_\Delta = P\cap \Delta$,  
$\SSS_\Delta = \{s\in \SSS: \partial s\ \mbox{crosses}\ \Delta\}$, and
$\SSS'_\Delta = \{s\in \SSS: s\ \mbox{contains}\ \Delta\}$.
Recursively construct a 3-hop spanner for $G[P_\Delta,\SSS_\Delta]$.
Also construct a 3-hop spanner for the biclique $P_\Delta\times \SSS'_\Delta$, of size $O(|P_\Delta|+|\SSS'_\Delta|)$ (consisting of two stars).

We still have to deal with the subset $Q\subset P$ of the remaining points that are not covered by the $(\le |\SSS|/r)$-level.\footnote{There is an alternative, simpler way to deal with these deep points in $Q$: Use a $(1/t)$-net for the halfspaces instead of the
points.  However, this approach will not work for the extension to \Cref{thm:2hop:halfspace} later.
}
Since there are at most $|\SSS|m/t$ containment pairs $(p,s)\in P\times\SSS$ by the shallowness of the halfspaces in $\SSS$, the
size of $Q$ is at most $\frac{|\SSS|m/t}{|\SSS|/r} = rm/t \le m/r^{\down{D/2}}$ by choosing $t=r^{\down{D/2}+1}$.  
We can just arbitrarily partition $\SSS$ into $O(r)$ subsets $\SSS^{(i)}$ of size at most $|\SSS|/r$ and recursively
construct a 3-hop spanner for $G[Q,\SSS^{(i)}]$ for each subset $\SSS^{(i)}$.
We take the union of all the spanners found.

We thus obtain the following recurrence:
\[ 
    T(m,n) \ \le\ O\left(r^{\down{D/2}}\right) T\left(\frac{m}{r^{\down{D/2}}}, \frac nr\right) + O\left(r^{O(1)}(m+n)\right).
\]
Combined with $T(m,n)=O(m^2+n)$ by \Cref{lem:easy:3hop}, the recurrence similarly solves to $T(m,n)=O^*(m^{\frac{2\down{D/2}-2}{2\down{D/2}-1}}n^{\frac{\down{D/2}}{2\down{D/2}-1}})$ for $\sqrt{n}\le m\le n^{\down{D/2}}$.
\end{proof}

As one application, we obtain improved bounds for the case of balls, by a standard lifting mapping from balls to halfspaces:

\begin{corollary}\label{cor:3hop:ball}
Every bipartite intersection graph between $n$ red balls and $n$ blue balls in a constant dimension $d$ has 
a 3-hop spanner with $O^*(n^{\frac{3\down{d/2}+1}{2\down{d/2}+1}})=O^*(n^{\frac{3}{2}-\frac{1}{2(2\down{d/2}+1)}})$ edges.
For example, for $d=3$, the bound is $O^*(n^{4/3})$; for $d\in\{4,5\}$, it is $O^*(n^{7/5})$.
\end{corollary}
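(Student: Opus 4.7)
The plan is to reduce bipartite ball intersection to point--halfspace incidence in one higher dimension via a standard linearization, and then invoke \Cref{thm:3hop:halfspace}. The only delicate point is that, because the radii of both the red and the blue balls enter the intersection condition multiplicatively, the natural paraboloid lift into $\R^{d+1}$ does not linearize; one has to spend an extra coordinate to store the radius, so that the lift lands in $\R^{d+2}$ rather than $\R^{d+1}$.

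Concretely, I would define, for each red ball $B(c,r)$, the lifted point
\[
\phi_R(B(c,r))\ =\ (c,\ \|c\|^2 - r^2,\ -2r)\ \in\ \R^{d+2},
\]
and, for each blue ball $B(c',r')$, the lifted halfspace
\[
\phi_B(B(c',r'))\ =\ \bigl\{(x,y,z)\in\R^{d+2}\ :\ -2c'\!\cdot x + y + r'z + (\|c'\|^2 - r'^2)\ \le\ 0\bigr\}.
\]
Substituting the red point into the blue inequality gives
\[
\|c\|^2 - 2c\!\cdot c' + \|c'\|^2 \ -\ r^2 - 2rr' - r'^2\ =\ \|c-c'\|^2 - (r+r')^2,
\]
which is $\le 0$ iff $\|c-c'\|\le r+r'$, i.e.\ iff the two balls intersect. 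Thus the bipartite intersection graph is exactly the point--halfspace incidence graph of $n$ lifted points and $n$ lifted halfspaces in $\R^{d+2}$.

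Applying \Cref{thm:3hop:halfspace} with $D=d+2$ then yields a 3-hop spanner of size $O^*\!\bigl(n^{\frac{3}{2}-\frac{1}{2(2\down{D/2}-1)}}\bigr)$. Since $\down{(d+2)/2}=\down{d/2}+1$ regardless of the parity of $d$, we have $2\down{D/2}-1 = 2\down{d/2}+1$, which gives the claimed exponent $\frac{3}{2}-\frac{1}{2(2\down{d/2}+1)}$. Specializing: for $d=3$ we get $4/3$; for $d=4$ and $d=5$ we get $7/5$; and so on. The main conceptual obstacle is resisting the temptation to use the $(d+1)$-dimensional paraboloid lift (which suffices for ball containment but not intersection of balls of arbitrary radii); once one accepts the extra radius coordinate, the argument is a direct reduction.
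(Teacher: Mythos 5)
Your proposal is correct and follows essentially the same route as the paper: a linearization of the ball-intersection condition into a point--halfspace incidence in $\R^{d+2}$ (the paper's lift places the radius and $\|c\|^2-r^2$ in the two extra coordinates, which is affinely equivalent to yours), followed by an application of \Cref{thm:3hop:halfspace} with $D=d+2$. The exponent bookkeeping via $\down{(d+2)/2}=\down{d/2}+1$ matches as well.
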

\begin{proof}
A red ball with center $(x_1,\ldots,x_d)$ and radius $y$ intersects a blue ball with center $(a_1,\ldots,a_d)$ and radius $b$ 
iff $(x_1-a_1)^2+\cdots+ (x_d-a_d)^2\le (y+b)^2$, 
i.e., the point $(x_1,\ldots,x_d,y,x_1^2+\cdots+x_d^2-y^2)\in\R^{d+2}$ lies inside the
halfspace $\{(x_1,\ldots,x_d,y,z)\in\R^{d+2}: z-2a_1x_1-\cdots-2a_dx_d-2by+a_1^2+\cdots+a_d^2-b^2\le 0\}$.
Thus, we can apply \Cref{thm:3hop:halfspace} with $D=d+2$.
\end{proof}

Note that in the non-bipartite case, an $O(n\log n)$ bound is known for 3-hop spanners~\cite{ChanH23}, by exploiting the fatness of balls.
In the next section, we show that the bound in \Cref{cor:3hop:ball}  actually holds for \emph{2-hop} spanners for balls in the non-bipartite setting.

\section{Upper Bounds for 2 Hops}
\label{sec:2-hop}
As mentioned in the introduction, for many families of objects, e.g., rectangles in $\R^2$, sub-quadratic bounds for 2-hop spanners of their (non-bipartite) intersection graphs are not possible.
In this section, we show that nontrivial upper bounds for 2-hop spanners are possible for intersection graphs of \emph{fat} objects (including balls).

\subsection{When one side forms a clique}

We begin by observing that most of the results in \Cref{sec:3hop} works for 2 hops if one side
of the bipartite graph forms a clique.  This very simple observation will be the key to deriving our fat-object results later:

\begin{lemma}\label{lem:easy:2hop}
Let $G=(U\cup V,E)$ be a bipartite graph where $|U|=m$ and $|V|=n$. Let $G^+$ be 
the union of $G$ and a clique on $U$. Then $G^+$ has a 2-hop spanner of size $O(m^2 + n)$.
\end{lemma}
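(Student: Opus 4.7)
The plan is to mimic the construction in \lemmaref{lem:easy:3hop}, but exploit the fact that $U$ now forms a clique so that any two vertices of $U$ are already at distance one, eliminating the need to insert $2$-hop paths between them. Concretely, I would build the spanner $H$ in two steps.

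First, include all $\binom{m}{2}=O(m^2)$ edges of the clique on $U$ in $H$. Second, for every $v\in V$ with $\deg_G(v)\ge 1$, pick an arbitrary neighbour of $v$ in $U$, say $u_v$, and add the edge $u_vv$ to $H$. The total number of edges of $H$ is clearly $O(m^2+n)$.

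It remains to verify the $2$-hop stretch. Every edge of the clique on $U$ lies in $H$ by construction, so it is preserved in $1$ hop. For an edge $uv$ of $G$ with $u\in U$ and $v\in V$, either $u_v=u$ and $uv\in H$ (a $1$-hop path), or $u_v\ne u$ and the concatenation $u\to u_v\to v$ gives a valid $2$-hop path in $H$, since the edge $uu_v$ belongs to the clique on $U$ and the edge $u_vv$ was added in the second step. The main (and really only) point to be careful about is noting that the clique edges in $G^+$ do all lie in $H$, so we need not provide $2$-hop detours for them; this is exactly the saving over the $3$-hop version, and no obstacle arises.
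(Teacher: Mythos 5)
Your construction is exactly the paper's: take the full clique on $U$ plus one arbitrary incident edge per non-isolated vertex of $V$, and route each edge $uv$ with $v\in V$ either directly or via the chosen neighbour $u_v$ using a clique edge. The argument is correct and matches the paper's proof essentially verbatim.
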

\begin{proof}
We construct a spanner subgraph $H^+$ of $G^+$ as follows. 
For each vertex $v\in V$, $\deg(v)\geq 1$, let $e_v$ be an arbitrary edge incident to $v$, and add $e_v$ to $H^+$. Furthermore, add the clique on $U$ to $H^+$.
We show that $H^+$ is a 2-hop spanner of $G^+$. Consider an edge $uv$ in $G$.  If $e_v=uv$, then we can go from $u$ to $v$ in 1 hop. Otherwise, say $e_v$ is $u'v$.  
We can go from $u$ to $v$ in 2 hops via the edges $uu'$ and $u'v$.
\end{proof}

The above lemma implies an $O(n^{3/2})$ upper bound on 2-hop spanners in this setting,
by the same grouping trick from the proof of \Cref{cor:easy:3hop}:

\begin{corollary}\label{cor:easy:2hop}
Let $G=(U\cup V,E)$ be a bipartite graph where $|U|+|V|=n$. Let $G^+$ be the union of $G$ and a clique on $U$. Then $G^+$ has a 2-hop spanner of size $O(n^{3/2})$.
\end{corollary}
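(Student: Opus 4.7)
The plan is to imitate the grouping trick used in the proof of \Cref{cor:easy:3hop}, adapted to the 2-hop setting where one side of the bipartite graph already forms a clique. I will partition $U$ (rather than the whole vertex set) into groups of size at most $\sqrt n$, apply \Cref{lem:easy:2hop} within each group, and then handle the clique edges of $G^+$ whose endpoints lie in different groups by a separate, cheap construction.

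Concretely, set $m=|U|$. If $m\le \sqrt n$, then \Cref{lem:easy:2hop} already gives a 2-hop spanner of size $O(m^2+|V|)=O(n)$. Otherwise, I would partition $U$ arbitrarily into $p=\lceil m/\sqrt n\rceil\le \sqrt n$ groups $U_1,\ldots,U_p$, each of size at most $\sqrt n$. For each $i$, let $G_i^+$ be the graph on $U_i\cup V$ whose edges are the bipartite edges of $G$ between $U_i$ and $V$ together with the clique on $U_i$; \Cref{lem:easy:2hop} yields a 2-hop spanner for $G_i^+$ of size $O(|U_i|^2+|V|)=O(n)$. Taking the union over the $p$ groups contributes $O(p\cdot n)=O(m\sqrt n)=O(n^{3/2})$ edges. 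To cover the clique edges of $G^+$ whose two endpoints lie in different groups $U_i,U_j$, I would add a star on $U$ centered at an arbitrary vertex $u^\ast\in U$; this contributes $|U|-1=O(n)$ additional edges and provides the 2-hop path $u\to u^\ast\to u'$ for every such pair. Verifying the 2-hop spanner property then splits into three routine cases: intra-group clique edges are directly included, inter-group clique edges use the star, and each bipartite edge $(u,v)$ with $u\in U_i$ is 2-spanned inside the spanner of $G_i^+$ via some $u'\in U_i$ with $u'v\in E$ together with the intra-$U_i$ clique edge $uu'$.

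The main subtlety is the interaction between the two edge types: the bipartite edges demand a local clique on the chosen group of $U$, because a star on $U$ alone would only produce 3-hop paths $u\to u^\ast\to u'\to v$, while the full clique on $U$ is far too expensive to retain globally. Choosing the group size to be $\sqrt n$ strikes the right balance, and the star provides cheap 2-hop coverage for the one remaining type of edge that the per-group construction does not see.
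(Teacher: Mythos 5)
Your proof is correct and is exactly the ``grouping trick'' the paper invokes (it states \Cref{cor:easy:2hop} without a written proof, referring to the argument of \Cref{cor:easy:3hop}): partition $U$ into $O(\sqrt n)$ groups of size $O(\sqrt n)$, apply \Cref{lem:easy:2hop} to each group against all of $V$, and cover the inter-group clique edges with a star on $U$. All three cases of the 2-hop verification check out, so nothing further is needed.
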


We also immediately obtain the following analogs to \Cref{thm:3hop,thm:3hop:halfspace}:

\begin{theorem}\label{thm:2hop}
Let $U$ and $V$ be two sets of $n$ elements.
We are given points $p_1(u),\ldots,p_\ell(u)\in \R^D$ for each $u\in U$ and semi-algebraic sets $S_1(v),\ldots,S_\ell(v)\subset \R^D$ of constant
description complexity for each $v\in V$, where $D$ and $\ell$ are constants.
For any $\ell$-variate Boolean formula $\Phi$,
 the graph 
\[G\ =\ (U\cup V,\ \{(u,v)\in U\times V:\ \Phi([p_1(u)\in S_1(v)],\ldots, [p_\ell(u)\in S_\ell(v)])\} \:\cup\: (U\times U))
\]
has a 2-hop spanner with $O^*(n^{\frac{3D-2}{2D-1}})=O^*(n^{\frac{3}{2}-\frac{1}{2(2D-1)}})$ edges.
\end{theorem}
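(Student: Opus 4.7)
The plan is to mirror the proof of \Cref{thm:3hop} nearly verbatim, substituting \Cref{lem:easy:2hop} for \Cref{lem:easy:3hop} in the base case so that the same polynomial-partitioning divide-and-conquer produces \emph{2-hop} rather than 3-hop paths. This works because the clique on $U$ is exactly the structural hypothesis of \Cref{lem:easy:2hop}, in place of the independent-set hypothesis on $V$ required by \Cref{lem:easy:3hop}, and that structural hypothesis is preserved verbatim by every recursive subproblem (a sub-clique of a clique is again a clique).

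Concretely, let $T_\ell(m,n)$ denote the worst-case optimal 2-hop spanner size in the asymmetric setting $|U|=m$, $|V|=n$. I would apply \Cref{cor:polypartition} to the points $\{p_\ell(u) : u \in U\}$ with parameter $r$ to obtain $O(r^D)$ cells, each containing at most $m/r^D$ points, and for each cell $\Delta$ define $U_\Delta, V_\Delta, V'_\Delta, V''_\Delta$ exactly as in the proof of \Cref{thm:3hop}. Then arbitrarily partition $V_\Delta$ into subsets of size at most $n/r$ and recursively build 2-hop spanners for each $(U_\Delta, V_\Delta^{(j)})$ as well as for $(U_\Delta, V'_\Delta)$ and $(U_\Delta, V''_\Delta)$ (for the last two, reducing $\ell$ to $\ell-1$ by fixing the $\ell$-th literal to true or false, respectively). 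Since $U_\Delta \subseteq U$, the clique on $U_\Delta$ is inherited automatically, so every subproblem has the structural form required by \Cref{lem:easy:2hop}.

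This gives the same recurrence
\[
    T_\ell(m,n)\ \le\ O(r^D)\bigl(T_\ell(m/r^D,\, n/r) + T_{\ell-1}(m,n)\bigr),
\]
with base cases $T_0(m,n) = O(m+n)$ (a biclique together with a clique on $U$ is 2-spanned by two stars meeting at a common $u^* \in U$) and $T_\ell(m,n) \le O(m^2+n)$ from \Cref{lem:easy:2hop}. The identical $k$-level unrolling as in \Cref{thm:3hop}, choosing $r^k \approx (m^2/n)^{1/(2D-1)}$ so that $(m/r^{Dk})^2 \approx n/r^k$, then solves to $T_\ell(m,n) = O^*(m^{(2D-2)/(2D-1)} n^{D/(2D-1)})$. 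At $m = n$ this is $O^*(n^{(3D-2)/(2D-1)}) = O^*(n^{3/2 - 1/(2(2D-1))})$, as claimed.

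The one subtlety beyond \Cref{thm:3hop}, and what I expect to be the main obstacle, is that clique edges of $G$ whose two endpoints land in different leaf cells $U_{\Delta_1}, U_{\Delta_2}$ are not contained in any recursive subproblem and are therefore not automatically 2-covered by the construction as described. I would address this cheaply by augmenting the final spanner with a single global star on $U$ centered at an arbitrary $u^* \in U$, contributing $O(m) = O(n)$ extra edges that are easily absorbed into the main term; every cross-cell clique pair $(u,u')$ then admits the 2-hop path $u \to u^* \to u'$, while intra-cell clique edges are already handled by the clique on $U_\Delta$ added at the corresponding leaf via \Cref{lem:easy:2hop}. Everything else transfers verbatim from the proof of \Cref{thm:3hop}.
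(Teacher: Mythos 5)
Your proposal is correct and matches the paper's proof, which is exactly the one-line "follow the proof of Theorem~\ref{thm:3hop} with Lemma~\ref{lem:easy:2hop} in place of Lemma~\ref{lem:easy:3hop}, and add a star on $U$." The "subtlety" you flag about cross-cell clique edges is precisely why the paper adds that global $O(m)$-size star, so your fix is the intended one.
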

\begin{proof}
We follow the proof of \Cref{thm:3hop}, but using \Cref{lem:easy:2hop} in place of \Cref{lem:easy:3hop}.
We also add a 2-hop spanner for $U\times U$ of $O(m)$ size (namely, a star).
\end{proof}

\begin{theorem}\label{thm:2hop:halfspace}
Let $P$ be a set of $n$ points and $S$ be a set of $n$ halfspaces in a constant dimension~$D$.
Then the graph $G = (P\cup S,\ \{(p,s)\in P\times S: p\in s\} \:\cup\: (P\times P))$ has a 2-hop spanner with 
$O^*(n^{\frac{3\down{D/2}-2}{2\down{D/2}-1}})=O^*(n^{\frac{3}{2}-\frac{1}{2(2\down{D/2}-1)}})$ edges.
\end{theorem}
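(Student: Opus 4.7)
The plan is to adapt the proof of \Cref{thm:3hop:halfspace} almost verbatim, with \Cref{lem:easy:2hop} replacing \Cref{lem:easy:3hop} as the base case and with the clique $P\times P$ exploited throughout to replace each length-two detour by a single edge. Let $T(m,n)$ denote the worst-case optimal size of a 2-hop spanner of
\[
G \;=\; \bigl(P\cup S,\ \{(p,s)\in P\times S:\, p\in s\}\:\cup\:(P\times P)\bigr)
\]
for $|P|=m$ and $|S|=n$, and, as before, split $S$ into upper and lower halfspaces and handle each in turn.

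At the outermost level I would take a $(1/t)$-net $N\subset P$ of size $O(t\log t)$ with $t=r^{\down{D/2}+1}$, and for each $(m/t)$-deep halfspace $s\in S$ pick $p_s\in N\cap s$ and add the edge $p_s s$. In the 3-hop proof one then adds an auxiliary 2-hop detour $\pi_{p,p'}\subset G[P,S]$ for every $(p,p')\in P\times N$; this detour is not permissible in a 2-hop spanner, so instead I would add every edge $pp'$ with $p\in P$ and $p'\in N$ directly, which lies in $G$ precisely because $P\times P$ is a clique. The extra cost is $O(mt\log t + n)$ edges, and every containment pair $(p,s)$ with $s$ deep now has the 2-hop path $p\to p_s\to s$. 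Since each $p\in P$ is also joined to every vertex of $N$, these same edges 2-hop-spannerize the whole clique $P\times P$ (through any vertex of $N$).

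Next I would apply Matou\v sek's shallow cutting lemma to the $(m/t)$-shallow halfspaces $\SSS$ exactly as in \Cref{thm:3hop:halfspace}, obtaining $O(r^{\down{D/2}})$ cells $\Delta$ with $|P_\Delta|\le m/r^{\down{D/2}}$ and at most $|\SSS|/r$ bounding hyperplanes crossing each. For every cell I would recursively construct a 2-hop spanner of $G[P_\Delta,\SSS_\Delta]\cup(P_\Delta\times P_\Delta)$. For the biclique $P_\Delta\times\SSS'_\Delta$ together with the clique on $P_\Delta$, I would pick an arbitrary pivot $p_0\in P_\Delta$ and add the edges from $p_0$ to every other vertex of $P_\Delta\cup\SSS'_\Delta$: this produces two stars through a \emph{common} $P$-side center, of total size $O(|P_\Delta|+|\SSS'_\Delta|)$, so that $p\to p_0\to s$ handles the biclique while $p\to p_0\to p'$ handles the sub-clique. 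The residual deep points $Q$, of size at most $m/r^{\down{D/2}}$ by the same depth argument as in \Cref{thm:3hop:halfspace}, are handled by partitioning $\SSS$ into $O(r)$ pieces of size $|\SSS|/r$ and recursing on $G[Q,\SSS^{(i)}]\cup(Q\times Q)$ for each. The resulting recurrence is identical to the 3-hop one,
\[
T(m,n) \;\le\; O\!\left(r^{\down{D/2}}\right) T\!\left(\frac{m}{r^{\down{D/2}}},\,\frac{n}{r}\right) + O\!\left(r^{O(1)}(m+n)\right),
\]
and together with the base case $T(m,n)=O(m^2+n)$ furnished by \Cref{lem:easy:2hop}, it solves to the stated bound exactly as in the 3-hop proof.

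The main subtlety I anticipate is the biclique-plus-clique step: a 2-hop spanner of $K_{a,b}$ alone requires $\Theta(ab)$ edges, so the $O(|P_\Delta|+|\SSS'_\Delta|)$ bound is only attainable because the clique on $P_\Delta$ lets us route both the biclique edges and the $P_\Delta$-clique edges through a \emph{single} pivot on the $P$-side; the same pattern recurs at every level of the recursion. This is also the structural reason the theorem must assume that $P\times P$ is a clique: without it, sub-quadratic 2-hop spanners are unattainable, as already noted at the start of \Cref{sec:2-hop}.
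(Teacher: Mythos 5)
Your proposal is correct and matches the paper's proof, which simply instructs the reader to follow the argument of \Cref{thm:3hop:halfspace} with \Cref{lem:easy:2hop} as the base case and with the 2-hop detours $\pi_{p,p'}$ replaced by direct edges $pp'$ from the clique $P\times P$. Your additional observations---that the biclique-plus-clique subproblems admit $O(|P_\Delta|+|\SSS'_\Delta|)$-size 2-hop spanners via a single $P$-side pivot, and that the net edges also span $P\times P$---are exactly the details the paper leaves implicit.
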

\begin{proof}
We follow the proof of \Cref{thm:3hop}, but using \Cref{lem:easy:2hop} in place of \Cref{lem:easy:3hop}, and just replacing the path $\pi_{p,p'}$ with an edge $pp'$.
\end{proof}

\subsection{Intersection graphs of fat objects}
\label{ssec:fat}

To obtain our results for fat objects, our idea is to adapt
one of the methods by Chan and Huang~\cite[Section~3.1]{ChanH23}, originally developed for constructing 3-hop spanners of $O(n\log n)$ size. In adapting their method to construct 2-hop spanners, we realize that the main subproblems arising from the recursion correspond to the case when one side forms a clique---a case which fortunately has already been addressed by the
previous subsection! If the fat objects are all of similar size, then a simple grid approach will easily reduce to the one-sided clique case. For fat objects of arbitrary sizes, Chan and Huang~\cite{ChanH23} used a divide-and-conquer approach based on shifted quadtrees.


We use the following definition of fatness~\cite{Chan03}: A family of objects is \emph{fat} if 
for every axis-aligned hypercube $\gamma$ with side length $\ell$, there exist $O(1)$ points hitting all objects that intersect $\gamma$ and have diameter at least $\ell$. There exists a number of different definitions of fatness in the
geometry literature (e.g., see~\cite{BergSVK02, EfratKNS00}). We find the above definition most
appropriate for our purpose. It is not difficult to see that arrangements of Euclidean balls or hyperrectangles of bounded aspect ratios in $\mathbb{R}^d$ are fat according to this definition.

\begin{theorem}\label{thm:2hop:fat:general}
Let $V$ be a set of $n$ fat objects in $\R^d$ for constant~$d$.
Then the intersection graph $G$ of $V$
has a 2-hop spanner with $O(n^{3/2})$ edges.
\end{theorem}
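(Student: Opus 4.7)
The plan is to adapt the shifted-quadtree divide-and-conquer of Chan and Huang~\cite{ChanH23}, which builds a 3-hop spanner of size $O(n\log n)$ for fat objects, to the 2-hop regime. The key observation is that the recursive subproblems produced by their decomposition have precisely the structure handled by \Cref{cor:easy:2hop}: at each cell the local task is to 2-hop-span a bipartite graph augmented by a clique on one side. Replacing their $O(n)$-size star construction by the $O(n^{3/2})$-size output of \Cref{cor:easy:2hop} is the entire strategy.

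First I would dispose of the single-scale case: every object in $V$ has diameter in some range $[L,2L)$. Cover $\R^d$ by an axis-aligned grid of side $\Theta(L)$; each object meets $O(1)$ grid cells. For any cell $\gamma$, every object touching $\gamma$ has diameter at least the side of $\gamma$, so fatness supplies $O(1)$ piercing points and decomposes those objects into $O(1)$ cliques in the intersection graph. A star per clique is a 2-hop spanner of each, yielding $O(n)$ edges in total for the single-scale case.

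For the general multi-scale case, I would build a shifted $d$-dimensional quadtree and assign each object $o$ to a canonical cell $\gamma(o)$ of side $\Theta(\diam(o))$ (the shift being chosen, as in~\cite{ChanH23}, so that $o$ sits well-inside $\gamma(o)$). For each cell $\gamma$, set
\[
V_\gamma \;=\; \{o:\gamma(o)=\gamma\},\qquad
U_\gamma \;=\; \{o:\gamma(o)\supsetneq\gamma,\; o\cap\gamma\neq\emptyset\}.
\]
Every element of $U_\gamma$ has diameter at least the side of $\gamma$, so fatness again decomposes $U_\gamma$ into $O(1)$ cliques in the intersection graph. An intersection edge $(o,o')$ with $\diam(o)\le\diam(o')$ can be charged to $\gamma=\gamma(o)$: there $o\in V_\gamma$ and either $o'\in U_\gamma$ or $\gamma(o')=\gamma(o)$ (in which case the edge is absorbed by the single-scale case applied at the child level). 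For each $\gamma$ and each of the $O(1)$ cliques $C\subseteq U_\gamma$, apply \Cref{cor:easy:2hop} to the bipartite graph $V_\gamma\times C$ together with the clique on $C$, producing a local 2-hop spanner of size $O((|V_\gamma|+|C|)^{3/2})$.

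The final size bound comes by summing the local contributions across the quadtree. The shifted-quadtree charging from~\cite{ChanH23} bounds $\sum_\gamma(|V_\gamma|+|U_\gamma|)$ linearly in $n$ (each object is canonical at a single cell and, thanks to the shift, appears as a ``large'' object in only $O(1)$ strictly smaller cells), and each local size $s_\gamma := |V_\gamma|+|U_\gamma|$ is at most $n$; hence
\[
\sum_\gamma s_\gamma^{3/2}\;\le\;\Bigl(\max_\gamma s_\gamma\Bigr)^{1/2}\sum_\gamma s_\gamma\;\le\;\sqrt n\cdot O(n)\;=\;O(n^{3/2}).
\]
The main obstacle is precisely this charging: because the local cost $s_\gamma^{3/2}$ is convex, any superfluous logarithmic factor in $\sum_\gamma s_\gamma$ would multiply through, so the shifted quadtree has to be set up so that each intersection edge is covered by exactly one local spanner call and so that $\sum_\gamma s_\gamma$ remains linear. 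Once that bookkeeping is in place, the single-scale base case and \Cref{cor:easy:2hop} carry the rest.
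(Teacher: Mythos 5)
Your high-level instinct is right---the proof does reduce to the one-side-is-a-clique case handled by \Cref{cor:easy:2hop}, with fatness supplying the $O(1)$ cliques---but the flat, cell-by-cell decomposition you propose breaks down at the summation step. The claim that $\sum_\gamma(|V_\gamma|+|U_\gamma|)=O(n)$ because each object ``appears as a large object in only $O(1)$ strictly smaller cells'' is false: an object $o'$ belongs to $U_\gamma$ for \emph{every} canonical cell $\gamma$ strictly inside $\gamma(o')$ that $o'$ meets, and there can be many such cells. Concretely, take $\sqrt n$ unit balls containing a common point and $n-\sqrt n$ tiny balls clustered near that point, each with its own canonical cell; then each of the $\Theta(n)$ tiny cells has $|U_\gamma|=\sqrt n$, so $\sum_\gamma s_\gamma=\Theta(n^{3/2})$ and $\sum_\gamma s_\gamma^{3/2}=\Theta(n^{7/4})$, already exceeding the target. (Worse, \Cref{lem:easy:2hop} re-inserts the full clique on $C$ in every cell, giving $\Theta(n)$ per cell and $\Theta(n^2)$ overall in this example.) If instead you shrink $U_\gamma$ to objects of comparable scale so that each object really does appear $O(1)$ times, you lose coverage of intersection edges between objects of very different diameters. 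There is also a smaller slip: an edge between two objects of comparable diameter lying in distinct sibling cells is covered neither by $V_\gamma\times U_\gamma$ nor by your ``$\gamma(o')=\gamma(o)$'' case, though this is repairable by the single-scale grid argument.

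The paper avoids the bad summation by replacing the flat decomposition with a \emph{balanced} recursion. After the shifting step (which, as you say, lets one assume all objects are $C$-aligned), it picks a centroid quadtree cell $\gamma$ so that at most $\alpha n$ objects lie completely inside and at most $\alpha n$ completely outside ($\alpha=\tfrac{2^d}{2^d+1}$), recurses on those two groups, and handles the objects crossing $\partial\gamma$ at this node only: by alignedness they have diameter $\Omega(\ell_\gamma)$, so fatness yields $O(1)$ piercing points, and for each piercing point $q$ the clique $U_q$ is paired against the \emph{entire} current set $V$ via \Cref{cor:easy:2hop} at cost $O(n^{3/2})$. Each object is consumed at the unique recursion node where it first crosses the centroid boundary, and the recurrence $T(n)\le T(n_1)+T(n_2)+O(n^{3/2})$ with $n_1,n_2\le\alpha n$ solves to $O(n^{3/2})$ because the per-level cost decays geometrically. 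The essential idea you are missing is this balanced split: it is what lets one afford an $n^{3/2}$-size local spanner at every node without any linear bound on the total multiplicity of objects across cells.
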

\begin{proof}
Recall that a \emph{quadtree cell} is an axis-aligned box of the form $[i_1/2^k,(i_1+1)/2^k)\times\cdots\times
[i_d/2^k,(i_d+1)/2^k)$ for some integers $i_1,\ldots,i_d,k$.
An object $u$ of diameter $\ell$ is \emph{$C$-aligned} if it is contained in a quadtree cell of side
length $C\ell$.
As argued in~\cite[Section~3.1]{ChanH23}, 
as a consequence of a known shifting lemma~\cite{Chan03}, one can find $O(d)$ vectors $\tau_1,\ldots,\tau_{O(d)}$ with the property that for every pair of objects $u$ and $v$, there exists at least one vector $\tau_j$ such that $u+\tau_j$ and $v+\tau_j$ are both $C$-aligned for some $C=O(d)$.  For each $\tau_j$, it suffices to construct a 2-hop spanner for the subset of all objects $u$ such that $u+\tau_j$ is $C$-aligned; we can then output the union of these $O(d)$ spanners.  From now on, we may thus assume that all objects are $C$-aligned.
 Let $T(n)$ be the worst-case optimal 2-hop spanner size under this assumption.

\begin{enumerate}
\item First find a quadtree cell $\gamma$ such that there are at most $\alpha n$ objects completely inside $\gamma$ and at most $\alpha n$ objects completely outside $\gamma$, with  $\alpha:=\frac{2^d}{2^d+1}$; see~\cite[Lemma~10]{ChanH23} (based on earlier work in~\cite{AryaMNSW98,Chan98}). This cell corresponds to a ``centroid'' of the quadtree.
Recursively construct a 2-hop spanner for the objects completely inside $\gamma$ and a 2-hop spanner for the objects completely outside $\gamma$. 
\item Let $Q_\gamma$ be a set of points hitting all objects that intersect $\partial\gamma$.  By alignedness, these
objects have diameter at least $\ell_\gamma/C$, where $\ell_\gamma$ denotes the side length of $\gamma$.  
By fatness, a hitting set of size $|Q_\gamma|=O(1)$ exists (since $\partial\gamma$ can be covered by $C^d$ hypercubes of side length $\ell_\gamma/C$). 
\item For each point $q\in Q_\gamma$, let $U_q$ be the subset of all objects hit by $q$ (this subset induces a clique).
Let $G_q$ be the bipartite intersection graph between $U_q$ and $V$. 
Construct a 2-hop spanner for $G_q\cup (U_q\times U_q)$ by \Cref{cor:easy:2hop}.
\end{enumerate}

We claim that the union of the spanners found is a 2-hop spanner of $G$.  To see this, consider an edge $uv$ in $G$.
If $u$ and $v$ are both inside $\gamma$ or both outside $\gamma$, then $u$ and $v$ are reachable by 2 hops 
by induction.  Otherwise, one of the objects, say $u$, intersects $\partial\gamma$.  Then $u$ is hit by some point $q\in Q_\gamma$,
and so $uv\in G_q$ and $u$ and $v$ are reachable by 2 hops.

We thus obtain the following recurrence:
\[ T(n)\ \leq\ \max_{n_1,n_2\le \alpha n:\ n_1+n_2\le n} \Big(T(n_1) + T(n_2)\Big) + O(n^{3/2}).
\]
This solves to $T(n)=O(n^{3/2})$.
\end{proof}


We similarly obtain improvements when the fat objects are semialgebraic with constant description complexity:

\begin{theorem}\label{thm:2hop:fat}
Let $V$ be a set of $n$ fat objects in $\R^d$ for constant~$d$.
We are given points $p_1(v),\ldots,p_\ell(v)\in \R^D$ and semi-algebraic set $S_1(v),\ldots,S_\ell(v)\subset \R^D$ of constant description complexity for each $v\in V$, where $D$ and $\ell$ are constants, satisfying the following property:
two objects $u,v\in V$ intersect iff $\Phi([p_1(u)\in S_1(v)],\ldots, [p_\ell(u)\in S_\ell(v)])$, where $\Phi$ is
an $\ell$-variate Boolean formula.
Then the intersection graph $G$ of $V$
has a 2-hop spanner with $O^*(n^{\frac{3D-2}{2D-1}})=O^*(n^{\frac{3}{2}-\frac{1}{2(2D-1)}})$ edges.
\end{theorem}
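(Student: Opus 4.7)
My plan is to mirror the proof of \Cref{thm:2hop:fat:general} almost verbatim, replacing the one use of \Cref{cor:easy:2hop} with the stronger semi-algebraic bound from \Cref{thm:2hop}. The geometric scaffolding (shifting, alignedness, quadtree centroid, and the $O(1)$ hitting set on the boundary) is purely about fatness in $\R^d$ and does not interact with the semi-algebraic description, so it carries over without change.

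Concretely, I first apply the shifting lemma of Chan~\cite{Chan03} exactly as in the previous proof to reduce to $O(d)$ subproblems in which every object is $C$-aligned for $C = O(d)$. For each such subproblem I pick a quadtree cell $\gamma$ with at most $\alpha n$ objects wholly inside and at most $\alpha n$ wholly outside, where $\alpha = 2^d/(2^d+1)$, and recurse on the inside- and outside-subsets. Then, by fatness, every object crossing $\partial\gamma$ is hit by some point in a set $Q_\gamma$ of size $O(1)$. For each $q\in Q_\gamma$, the set $U_q\subseteq V$ of objects hit by $q$ is a clique in $G$, so the portion of $G$ involving $U_q$ has the form $G_q \cup (U_q\times U_q)$, where $G_q$ is the bipartite intersection graph between $U_q$ and $V$. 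This is exactly the setting of \Cref{thm:2hop}: since the intersection relation on $V$ is semi-algebraic with parameter $D$, the same point/set encoding witnesses $G_q$, and \Cref{thm:2hop} yields a 2-hop spanner for $G_q \cup (U_q\times U_q)$ of size $O^*(n^{(3D-2)/(2D-1)})$.

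The correctness argument is identical to that of \Cref{thm:2hop:fat:general}: for any edge $uv\in G$, either both endpoints lie in the inside subset, both lie in the outside subset, or one of them crosses $\partial\gamma$ and is therefore hit by some $q\in Q_\gamma$, in which case the pair is covered by the spanner of $G_q\cup(U_q\times U_q)$. Taking unions over the $O(d)$ shifted copies, over $\gamma$, and over $q\in Q_\gamma$, I obtain the recurrence
\[
T(n)\ \le\ \max_{n_1,n_2\le \alpha n,\ n_1+n_2\le n}\bigl(T(n_1)+T(n_2)\bigr)\ +\ O^*\!\left(n^{\frac{3D-2}{2D-1}}\right).
\]
Since $(3D-2)/(2D-1) > 1$ for $D\ge 1$ and $\alpha<1$, this recurrence solves to $T(n)=O^*(n^{(3D-2)/(2D-1)})$, as desired.

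The main potential obstacle is verifying that the semi-algebraic hypothesis is inherited by the subproblems at the leaves and by $G_q\cup(U_q\times U_q)$: the quadtree split only restricts which elements of $V$ participate, so the same points $p_i(v)$ and sets $S_i(v)$ keep working, and the clique piece $U_q\times U_q$ is precisely the extra ingredient that \Cref{thm:2hop} is designed to accommodate. Once this compatibility is observed, everything else is a direct combination of \Cref{thm:2hop:fat:general}'s divide-and-conquer with the stronger leaf bound from \Cref{thm:2hop}.
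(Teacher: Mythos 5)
Your proposal is correct and follows exactly the paper's argument: the paper's proof of this theorem is literally "same as the proof of \Cref{thm:2hop:fat:general}, but using \Cref{thm:2hop} instead of \Cref{cor:easy:2hop}," and you have simply spelled out that substitution, the resulting recurrence, and the (routine) check that the semi-algebraic encoding passes to the subgraphs $G_q\cup(U_q\times U_q)$. Nothing further is needed.
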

\begin{proof}
Same as the proof of \Cref{thm:2hop:fat:general}, but using \Cref{thm:2hop} instead of \Cref{cor:easy:2hop}.
\end{proof}

\begin{corollary}
The intersection graph of every set of $n$ fat simplices (or fat polyhedra each of constant complexity, e.g., non-axis-aligned hypercubes) in a constant dimension $d$ has 
a 2-hop spanner with $O(n^{\frac{3}{2} - \frac{1}{O(d^2)}})$ edges.
Specifically, for $d=3$, the bound is $O^*(n^{10/7})$.
\end{corollary}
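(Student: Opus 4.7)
The plan is to mimic the reduction used in \Cref{cor:3hop:simplices} for the 3-hop setting, but instead invoke the 2-hop result for fat semi-algebraic objects, \Cref{thm:2hop:fat}. The only hypotheses of \Cref{thm:2hop:fat} are fatness of the objects in $\R^d$ and an encoding of the intersection predicate as a constant Boolean combination of point-in-semi-algebraic-set tests in $\R^D$, so everything comes down to producing such an encoding with a small value of $D$ while preserving fatness of the original polyhedra in $\R^d$. In particular, unlike the 3-hop corollary, we must \emph{not} decompose a fat polyhedron into simplices, because simplicial pieces of a fat polyhedron need not be fat; we keep each polyhedron as a single vertex of the intersection graph and instead encode its shape with a constant number of predicates.

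For the $d=3$ case, I would repeat the case analysis from \Cref{cor:3hop:simplices}: two polyhedra of constant complexity intersect iff either (a) some vertex of one lies inside the other, or (b) some edge of one crosses some facet of the other. Case (a) is a predicate of the form ``a point in $\R^3$ lies in a semi-algebraic set in $\R^3$'' and so uses $D=3$. Case (b) decomposes into the three sub-predicates ``the line through an edge meets the plane through a facet inside the facet'' (encoding the line as a point in $\R^4$) and ``the two endpoints of the edge lie on opposite sides of the supporting plane'' (already a $\R^3$ predicate). Taking the maximum, we use $D=4$, and $\frac{3}{2} - \frac{1}{2(2\cdot 4 - 1)} = \frac{10}{7}$, which gives the stated $O^*(n^{10/7})$ bound.

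For general constant $d$, each polyhedron of constant complexity in $\R^d$ is determined by $O(1)$ vertices, each a point of $\R^d$, so it can be encoded as a single point of $\R^{O(d)}$, and the set of all such polyhedra intersecting a given polyhedron is a semi-algebraic set in $\R^{O(d)}$ of constant description complexity. Using the simplex-based encoding with $d+1$ vertices per simplex gives the bound $D = d(d+1) = O(d^2)$; for polyhedra of constant complexity, one still gets $D = O(d^2)$ (with possibly a larger implicit constant depending on the complexity). Feeding this into \Cref{thm:2hop:fat} yields an exponent of $\frac{3}{2} - \frac{1}{2(2D-1)} = \frac{3}{2} - \frac{1}{O(d^2)}$, as claimed.

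The only thing to check carefully is the first paragraph's claim: that the intersection predicate between two fat polyhedra of constant complexity can indeed be written in the semi-algebraic ``point-in-set'' form demanded by \Cref{thm:2hop:fat}, with all points living in $\R^D$ for the stated $D$. This is routine but is the one place where the bound on $D$ is decided, and therefore where the improvement for $d=3$ arises; everything else is a direct application of \Cref{thm:2hop:fat}, exactly paralleling how \Cref{cor:3hop:simplices} follows from \Cref{thm:3hop}.
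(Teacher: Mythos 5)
Your proposal is correct and follows essentially the same route as the paper: apply \Cref{thm:2hop:fat} with the intersection-predicate encoding from \Cref{cor:3hop:simplices}, taking $D=4$ for $d=3$ (giving $\frac{3}{2}-\frac{1}{14}=\frac{10}{7}$) and $D=O(d^2)$ in general. Your added remark that the simplicial decomposition is used only to express the predicate $\Phi$ while the fat polyhedra themselves remain the vertices is a correct and worthwhile clarification, consistent with how the paper intends the reduction.
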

\begin{proof}
As in the proof of \Cref{cor:3hop:simplices}, this follows by setting $D=d^2$, or in the $d=3$ case, $D=4$.
\end{proof}

\begin{theorem}\label{thm:2hop:fat:halfspace}
Let $V$ be a set of $n$ fat objects in a constant dimension $d$.
We are given a point $p(v)\in \R^D$ and a halfspace $S(v)\subset \R^D$ for each $v\in V$, for a constant $D$, satisfying the following property:
two objects $u$ and $v$ intersect iff $p(u)\in S(v)$.
Then the intersection graph $G$ of $V$ has a 2-hop spanner with $O^*(n^{\frac{3\down{D/2}-2}{2\down{D/2}-1}})=O^*(n^{\frac{3}{2}-\frac{1}{2(2\down{D/2}-1)}})$ edges.
\end{theorem}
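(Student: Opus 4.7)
The plan is to replicate the argument of Theorem \ref{thm:2hop:fat:general} almost verbatim, substituting Theorem \ref{thm:2hop:halfspace} for Corollary \ref{cor:easy:2hop} at the key step -- exactly as Theorem \ref{thm:2hop:fat} is obtained from Theorem \ref{thm:2hop:fat:general} by substituting Theorem \ref{thm:2hop} for Corollary \ref{cor:easy:2hop}. The entire outer machinery (shifting, quadtree centroid, fatness-based hitting set) is identical; only the inner subproblem uses a sharper tool.

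Concretely, first I would invoke the shifting lemma to find $O(d)$ vectors $\tau_1,\ldots,\tau_{O(d)}$ with the property that for every pair of objects, at least one $\tau_j$ simultaneously $C$-aligns both translates, reducing to the case that every object in $V$ is $C$-aligned with $C=O(d)$. Next I would perform the quadtree centroid decomposition: find a quadtree cell $\gamma$ with at most $\alpha n$ objects lying completely inside and at most $\alpha n$ completely outside, where $\alpha = 2^d/(2^d+1)$, and recurse on both subfamilies. By fatness and $C$-alignedness, every object whose interior crosses $\partial\gamma$ has diameter $\Omega(\ell_\gamma)$ and hence can be stabbed by a set $Q_\gamma$ of size $O(1)$.

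For each stabbing point $q\in Q_\gamma$, let $U_q$ denote the clique of all objects of $V$ containing $q$, and let $G_q$ be the bipartite intersection graph between $U_q$ and $V$. The crucial observation is that $G_q$ inherits the point-halfspace incidence structure of $G$: choosing $P:=\{p(u):u\in U_q\}$ and $S:=\{S(v):v\in V\}$, an edge $(u,v)\in U_q\times V$ is present iff $p(u)\in S(v)$, so $G_q\cup (U_q\times U_q)$ is exactly the graph to which Theorem \ref{thm:2hop:halfspace} applies. The asymmetric version ($|P|=m\le n$, $|S|\le n$) implicit in that proof then yields a 2-hop spanner of size $O^*\bigl(n^{3/2 - 1/(2(2\down{D/2}-1))}\bigr)$. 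Summing over the $O(1)$ stabbing points keeps the bound.

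The resulting recurrence $T(n)\le T(n_1)+T(n_2) + O^*\bigl(n^{3/2 - 1/(2(2\down{D/2}-1))}\bigr)$ with $n_1,n_2\le \alpha n$ and $n_1+n_2\le n$ solves to the claimed bound by the same geometric-shrinkage argument used for Theorems \ref{thm:2hop:fat:general} and \ref{thm:2hop:fat}, since the additive exponent exceeds $1$ and $\alpha<1$. I do not foresee any substantive obstacle; the only mild subtlety is verifying that the point-halfspace data $p(\cdot),S(\cdot)$ are attached to individual objects and therefore descend unchanged to the induced subgraph $G_q$, which is immediate from the hypothesis.
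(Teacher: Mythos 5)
Your proposal is correct and matches the paper's proof exactly: the paper's argument for this theorem is literally ``same as the proof of Theorem~\ref{thm:2hop:fat:general}, but using Theorem~\ref{thm:2hop:halfspace} instead of Corollary~\ref{cor:easy:2hop},'' which is precisely the substitution you carry out (and you correctly verify the one subtlety, that the point--halfspace data descend to the induced bipartite subgraph $G_q$).
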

\begin{proof}
Same as the proof of \Cref{thm:2hop:fat:general}, but using \Cref{thm:2hop:halfspace} instead of \Cref{cor:easy:2hop}.
\end{proof}

\begin{corollary}\label{cor:2hop:ball}
The intersection graph of $n$ balls in a constant dimension~$d$ has 
a 2-hop spanner with $O^*(n^{\frac{3\down{d/2}+1}{2\down{d/2}+1}})=O^*(n^{\frac{3}{2}-\frac{1}{2(2\down{d/2}+1)}})$ edges.
For example, for $d=3$, the bound is $O^*(n^{4/3})$; for $d\in\{4,5\}$, the bound is $O^*(n^{7/5})$.
\end{corollary}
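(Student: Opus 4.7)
The plan is to combine two ingredients already established in the excerpt: the fatness-based 2-hop spanner machinery of \Cref{thm:2hop:fat:halfspace}, and the classical paraboloid lifting from balls to halfspaces in one additional dimension that was used in the proof of \Cref{cor:3hop:ball}. The corollary should then follow by a direct substitution of parameters.

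First I would verify that balls are fat in the sense of \Cref{ssec:fat}: any axis-aligned hypercube $\gamma$ of side length $\ell$ can be pierced by $O(1)$ points so as to hit every ball intersecting $\gamma$ of diameter at least $\ell$. This is immediate because such a ball, having diameter $\ge\ell$, contains any of $O(1)$ representative points drawn from a constant-resolution grid on $\gamma$. Since Euclidean balls in $\R^d$ are explicitly mentioned in \Cref{ssec:fat} as a basic example of fat objects, I would simply cite that observation.

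Next, I would recall the lifting from the proof of \Cref{cor:3hop:ball}: a ball with center $(x_1,\dots,x_d)$ and radius $y$ intersects one with center $(a_1,\dots,a_d)$ and radius $b$ iff $(x_1-a_1)^2+\cdots+(x_d-a_d)^2\le(y+b)^2$, equivalently iff the point $(x_1,\dots,x_d,y,x_1^2+\cdots+x_d^2-y^2)\in\R^{d+2}$ lies in the halfspace $\{z-2a_1x_1-\cdots-2a_dx_d-2by+a_1^2+\cdots+a_d^2-b^2\le 0\}$. Thus each ball $v$ simultaneously induces a lifted point $p(v)\in\R^{d+2}$ and a lifted halfspace $S(v)\subset\R^{d+2}$, and two balls $u,v\in V$ intersect iff $p(u)\in S(v)$. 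This is exactly the hypothesis required by \Cref{thm:2hop:fat:halfspace} with $D=d+2$.

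Applying \Cref{thm:2hop:fat:halfspace} yields a 2-hop spanner of size $O^*(n^{(3\lfloor D/2\rfloor-2)/(2\lfloor D/2\rfloor-1)})$. Since $\lfloor D/2\rfloor=\lfloor(d+2)/2\rfloor=\lfloor d/2\rfloor+1$, the exponent simplifies to
\[
\frac{3(\lfloor d/2\rfloor+1)-2}{2(\lfloor d/2\rfloor+1)-1}\;=\;\frac{3\lfloor d/2\rfloor+1}{2\lfloor d/2\rfloor+1},
\]
which matches the stated bound, and the numerical cases $d=3$ ($D=5$, $\lfloor D/2\rfloor=2$, exponent $4/3$) and $d\in\{4,5\}$ ($\lfloor D/2\rfloor=3$, exponent $7/5$) follow. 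I do not expect any real obstacle: the proof is essentially a parameter substitution into two previously proved statements, and the only minor check is confirming that balls fit the fatness definition used in \Cref{thm:2hop:fat:halfspace}.
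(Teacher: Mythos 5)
Your proposal is correct and follows exactly the paper's route: the paper proves this corollary by citing the lifting map from the proof of \Cref{cor:3hop:ball} and invoking \Cref{thm:2hop:fat:halfspace} with $D=d+2$, which is precisely what you do, and your exponent arithmetic $\down{(d+2)/2}=\down{d/2}+1$ checks out. The only cosmetic slip is the phrase that a large ball ``contains any of $O(1)$ representative points'' --- it contains \emph{at least one} point of a suitable constant-size grid --- but this does not affect the argument.
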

\begin{proof}
As in the proof of \Cref{cor:3hop:ball}, this follows from \Cref{thm:2hop:fat:halfspace} by setting $D=d+2$.
\end{proof}

For fat axis-aligned boxes, we have the following theorem, which generalizes a result by Conroy and T\'oth~\cite{ConroyT23} from $d=2$ to higher dimensions:

\begin{theorem}\label{thm:2hop:fatbox}
For every $d\geq 2$, the intersection graph of $n$ fat axis-aligned boxes (e.g., axis-aligned hypercubes) in a constant dimension~$d$ has 
a 2-hop spanner with $O(n\log^{d+1}n)$ edges.
\end{theorem}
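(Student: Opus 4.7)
The plan is to follow the recursive framework of \Cref{thm:2hop:fat:general} based on shifted quadtrees and centroid cells, but to replace the generic subroutine (\Cref{cor:easy:2hop}, which costs $O(n^{3/2})$) by a near-linear construction that exploits the axis-aligned structure. After the shifting trick we may assume all boxes are $C$-aligned; we find a centroid quadtree cell $\gamma$ splitting the boxes into inside/outside parts of sizes at most $\alpha n$ each (with $\alpha<1$ constant), recurse on each part, and handle the boxes crossing $\partial\gamma$ via the fatness-supplied constant-size hitting set $Q_\gamma$. For each $q\in Q_\gamma$ we must build a 2-hop spanner for $G_q\cup(U_q\times U_q)$, where $U_q\subseteq V$ is the clique of boxes containing $q$.

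To solve this subproblem efficiently, we exploit the fact that each $u\in U_q$ has the form $\prod_{i=1}^d[a_i^u,b_i^u]$ with $a_i^u\le q_i\le b_i^u$. For any $v=\prod_{i=1}^d[c_i^v,d_i^v]\in V$, the intersection test $u\cap v\ne\emptyset$ splits into $3^d=O(1)$ subcases depending on, for each coordinate $i$, whether $v$ contains $q_i$ (no constraint on $u$), lies entirely below $q_i$ (one-sided constraint $a_i^u\le d_i^v$), or entirely above $q_i$ (one-sided constraint $c_i^v\le b_i^u$). In every subcase the condition is a conjunction of at most $d$ one-sided inequalities on the coordinates of $u$, i.e.\ a dominance query. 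We build a $d$-level orthogonal range tree over $U_q$ keyed on the appropriate coordinates; the tree has $O(|U_q|\log^d|U_q|)$ canonical subsets in total, and for any query $v$ the matching $u$'s decompose into a union of $O(\log^d|U_q|)$ canonical subsets. For each canonical subset $C_\eta$ we fix an arbitrary representative $r_\eta\in C_\eta$, add the star edges $\{r_\eta w : w\in C_\eta\}$ (valid because $U_q$ is a clique), and for each $v\in V$ add the edge $v\,r_\eta$ for each of the $O(\log^d|U_q|)$ canonical subsets appearing in the decomposition of $v$'s neighbors. Any edge $uv\in G_q$ is then covered by a 2-hop path $v\to r_\eta\to u$ with $u\in C_\eta$.

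The edge count is $O(|U_q|\log^d|U_q|)+O(|V|\log^d|U_q|)=O(n\log^d n)$ per $q$, and hence $O(n\log^d n)$ per recursive call since $|Q_\gamma|=O(1)$. This yields the recurrence $T(n)\le T(n_1)+T(n_2)+O(n\log^d n)$ with $n_1+n_2\le n$ and $n_1,n_2\le\alpha n$; since $f(n)=n\log^d n$ satisfies $f(n_1)+f(n_2)\le f(n)$, the per-level cost telescopes to $O(n\log^d n)$, and the $O(\log n)$-depth recursion solves to $T(n)=O(n\log^{d+1}n)$, as desired. The main obstacle is cleanly carrying out the $3^d$-case decomposition and verifying the $O(\log^d|U_q|)$ bound on canonical subsets per query; both are standard facts about orthogonal range trees for dominance-type searches, but they must be stated carefully for the asymmetric representative-plus-clique structure our 2-hop spanner uses.
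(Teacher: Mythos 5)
Your proposal is correct and takes essentially the same route as the paper: the paper also runs the recursion of \Cref{thm:2hop:fat:general} and handles each subproblem $G_q\cup(U_q\times U_q)$ by an $O(n\log^{d}n)$-size range-tree-style biclique cover of the bipartite box intersection graph, placing a star centered at a representative of the clique side of each biclique, with the outer recursion contributing the final $\log n$ factor. Your explicit $3^d$-case dominance decomposition (exploiting that all boxes of $U_q$ contain $q$) is simply a concrete instantiation of the biclique cover the paper cites, and your edge counts and recurrence match the paper's.
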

\begin{proof}
Same as the proof of \Cref{thm:2hop:fat}, but to construct a 2-hop spanner for $G_p\cup (U_q\times U_q)$, 
we use a known \emph{biclique cover} construction for
the bipartite intersection graph $G_p$ of boxes: this gives
a collection of bicliques of the form $U_i\times V_i$, covering all the edges of $G_p$,
with total size $\sum_i (|U_i|+|V_i|) = O(n\log^{d}n)$.
(The construction is obtained by range-tree-style divide-and-conquer; e.g., see \cite{Chan06}.)
When the clique $U_i\times U_i$ is added to the biclique $U_i\times V_i$, there is a trivial
2-hop spanner of size $O(|U_i|+|V_i|)$ (namely, take a star centered at an arbitrary point in $U_i$).
We just take the union of these spanners (together with a linear-size 2-hop spanner for $U_q\times U_q$) 
and  obtain a spanner of size $O(n\log^{d}n)$ for $G_p\cup (U_q\times U_q)$.
The recursion in the proof of \Cref{thm:2hop:fat} causes one more logarithmic factor.
\end{proof}

\section{A 3-Hop Upper Bound for Axis-Aligned Boxes}

In this section, we note that the known $O(n\log n)$-size 3-hop spanner construction for axis-aligned rectangles in $\R^2$ by Chan and Huang~\cite{ChanH23} can be extended to give an $O(n\log^{d-1}n)$-size 3-hop spanner for axis-aligned boxes in $\R^d$.
(One could alternatively use biclique covers, as in the proof of \Cref{thm:2hop:fatbox}, but this would yield a slightly larger
$O(n\log^dn)$ bound.)

\begin{theorem}
The intersection graph of every set of $n$ axis-aligned boxes in a constant dimension $d$ has a 3-hop spanner with $O(n\log^{d-1}n)$ edges.
\end{theorem}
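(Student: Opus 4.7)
The plan is to argue by induction on the dimension $d$, with base case $d=2$ supplied by the Chan--Huang $O(n\log n)$ construction for axis-aligned rectangles (which the paper explicitly invokes). The inductive step reduces a $d$-dimensional instance to $O(\log n)$ many $(d-1)$-dimensional instances via a segment-tree decomposition along the $x_d$-coordinate.

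Concretely, I would project each of the $n$ boxes onto the $x_d$-axis to obtain intervals and build a balanced binary segment tree $T$ on the $2n$ endpoints. Each interval admits a canonical decomposition into $O(\log n)$ allocation nodes of $T$; writing $A_\nu$ for the set of boxes allocated to a node $\nu$, the standard bound $\sum_\nu|A_\nu|=O(n\log n)$ holds. Any two boxes in a common $A_\nu$ have $x_d$-projections both containing the range of $\nu$, so they intersect in $x_d$; consequently, their intersection in $\R^d$ is determined purely by the first $d-1$ coordinates. Invoking the inductive hypothesis on each $A_\nu$ (viewed as boxes in $\R^{d-1}$) yields a 3-hop spanner of size $O(|A_\nu|\log^{d-2}|A_\nu|)$, and summing gives
\[
\sum_\nu O\bigl(|A_\nu|\log^{d-2}|A_\nu|\bigr)\;\le\;O(\log^{d-2}n)\cdot\sum_\nu|A_\nu|\;=\;O(n\log^{d-1}n).
\]

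The main obstacle I expect is handling the \emph{cross-level} intersecting pairs $(u,v)$ where $u$ is allocated to a proper ancestor $\nu_u$ of $v$'s allocation node $\nu_v$, so that $u$ and $v$ do not jointly lie in any $A_\nu$. For such a pair, $u$'s $x_d$-interval still covers the range of $\nu_v$ (since it covers the larger range $R_{\nu_u}$), so $u$ and $v$ still intersect in $x_d$, and the condition once again reduces to a $(d-1)$-dimensional intersection question. My plan is to attach, at each allocation node $\nu_v$, a bipartite $(d-1)$-dimensional spanner between $A_{\nu_v}$ and the ancestral boxes whose $x_d$-intervals cover the range of $\nu_v$, charging each cross-level pair only to the deeper node $\nu_v$. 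The delicate point is to show that this bipartite bookkeeping adds at most $O(n\log^{d-1}n)$ edges in total rather than the $O(n\log^d n)$ one would get from a crude biclique-cover argument (which is the alternative the paper explicitly deems suboptimal); achieving this requires a careful amortization and a strengthened inductive hypothesis that covers bipartite instances at the same asymptotic cost.
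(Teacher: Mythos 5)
Your overall strategy is the same as the paper's (induction on $d$ with base case the Chan--Huang $O(n\log n)$ bound for $d=2$, and a segment-tree/range-tree recursion along the $x_d$-axis that reduces each node to a $(d-1)$-dimensional instance), and your treatment of pairs sharing a canonical node is fine. But there is a genuine gap exactly where you flag "the delicate point": the cross-level pairs. Your plan is to attach, at each allocation node $\nu$, a bipartite $(d-1)$-dimensional spanner between $A_\nu$ and \emph{all} ancestral boxes whose intervals cover $R_\nu$. Every box allocated at a proper ancestor $\mu$ of $\nu$ covers $R_\mu\supseteq R_\nu$, so the total input size of these bipartite instances is $\sum_\mu |A_\mu|\cdot(\text{number of descendants of }\mu)$, which can be as large as $\Theta(n^2)$ (e.g., if many boxes are allocated near the root). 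So even before any spanner is built, the bookkeeping as described is too expensive; no inductive hypothesis, however strengthened, can rescue a construction whose vertex multiplicities are already superlinear per level. You have identified the obstacle but not resolved it, and resolving it is the actual content of the proof.

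The paper's fix is to orient this bookkeeping top-down rather than bottom-up, via a short/long classification. The recursion carries a bipartite instance $G[U,V]$ of boxes that are all \emph{short} for the current slab $\sigma$ (measured by the number $N$ of corner points strictly inside $\sigma$). After splitting $\sigma$ into two subslabs, the boxes that become \emph{long} for a subslab $\sigma_j$ (interval covering $\sigma_j$, i.e., your ancestor-allocated boxes) are paired with \emph{everything} present in $\sigma_j$ in a single $(d-1)$-dimensional call $T_{d-1}(O(N))$ at that node, while only the short-short pairs are passed down in dimension $d$. Because each corner point descends into exactly one subslab and each interval is long for $O(1)$ nodes per level, the per-level totals are $O(N)$, yielding the recurrence $T_d(N)\le 2\,T_d(N/2)+O(1)\,T_{d-1}(N)$, which gives $O(\log N)$ levels each contributing $O(N\log^{d-2}N)$, hence $O(N\log^{d-1}N)$. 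This also supplies the "strengthened inductive hypothesis" you correctly anticipate needing: the induction must be stated for bipartite instances from the outset. Without this amortization your argument only establishes the weaker $O(n\log^d n)$ bound obtainable from a crude biclique cover, which the paper explicitly sets out to improve upon.
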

\begin{proof}
We will prove the result more generally for bipartite intersection graphs and use a divide-and-conquer strategy similar to range trees or segment trees.

Let $\sigma$ be a slab orthogonal to the $d$-th axis.
We say that an axis-aligned box $s$ is \emph{short} for $\sigma$ if it has at least one corner point strictly inside $\sigma$,
and $s$ is \emph{long} for $\sigma$ if it intersects $\sigma$ but all its corner points are outside or on
the boundary of $\sigma$.

Suppose we are given two sets $U$ and $V$ of axis-aligned boxes in $\R^d$, all short in $\sigma$,
with a total of $N$ corner points strictly inside $\sigma$.  Note that since the boxes are short, the number of boxes must be $O(N)$.
We want to construct a 3-hop spanner for the bipartite intersection graph $G[U,V]$ between $U$ and $V$.

\newcommand{\llong}{\textrm{long}}
\newcommand{\short}{\textrm{short}}
Divide $\sigma$ into two subslabs $\sigma_1$ and $\sigma_2$, each containing $n/2$ corner points.
For each $j\in\{1,2\}$, let $U_j$ (resp.\ $V_j$) be the boxes of $U$ (resp.\ $V$) intersecting $\sigma_j$, clipped to $\sigma_j$.
Let $U_j^\short$ (resp.\ $V_j^\short$) be the boxes of $U_j$ (resp.\ $V_j$) that are short for $\sigma_j$.
Let $U_j^\llong$ (resp.\ $V_j^\llong$) be the boxes of $U_j$ (resp.\ $V_j$) that are long for $\sigma_j$.

For each $j\in\{1,2\}$:
\begin{enumerate}
\item Recursively construct a 3-hop spanner for $G[U_j^\short,V_j^\short]$.
\item Recursively construct a 3-hop spanner for $G[U_j,V_j^\llong]$ by projection to the first $d-1$ coordinates.
\item Recursively construct a 3-hop spanner for $G[U_j^\llong,V_j]$ by projection to the first $d-1$ coordinates.
\end{enumerate}

We then take the union of the spanners found.  As a result, we then obtain the following recurrence for the spanner size:
\[ T_d(N) \:\le\: 2\,T_d(N/2) + O(1) T_{d-1}(N).
\]
For the base case, we have $T_2(N)=O(N\log N)$ by Chan and Huang~\cite{ChanH23}.
The recurrence solves to $T_d(N)=O(N\log^{d-1}N)$.
\end{proof}

\section{Lower Bounds}

In this section, we prove lower bounds for 3-hop and 2-hop spanners, by relating the problem to the combinatorial question
of bounding the size of geometric graphs avoiding $K_{2,2}$.
Our lower bounds for 3-hop spanners are obtained from the following simple observation:

\begin{observation}\label{obs:lb:3hop}
If a bipartite graph $G$ does not contain any 4-cycle, i.e., $K_{2,2}$, then any 3-hop spanner of $G$ must include all its edges.
\end{observation}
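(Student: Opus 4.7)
The plan is to prove the observation by a short contradiction argument exploiting bipartiteness together with the forbidden $K_{2,2}$. Suppose for contradiction that $G'$ is a 3-hop spanner of $G$ that omits some edge $uv \in E(G)$, with $u$ in one part and $v$ in the other.

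The first key step is to observe that the $u$-to-$v$ path in $G'$ guaranteed by the spanner property cannot have length $1$ (since $uv \notin G'$) and cannot have length $2$ either. The length-$2$ case is ruled out by bipartiteness: any walk between vertices in different parts of a bipartite graph has odd length. Hence the connecting path in $G'$ must have length exactly $3$, say $u - a - b - v$.

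The second key step is to read off a $K_{2,2}$ from this path. Since $u$ and $v$ lie in opposite parts, the parities along the path force $a$ to be in the same part as $v$ and $b$ to be in the same part as $u$. The edges $ua$, $bv$ exist in $G$ (because they lie in $G' \subseteq G$), and the edge $uv$ exists by assumption. The remaining required edge $ab$ also lies in $G$. Then on the vertex classes $\{u,b\}$ and $\{a,v\}$, all four edges $ua$, $uv$, $ba$, $bv$ are present in $G$, giving a $K_{2,2}$ subgraph and contradicting the hypothesis. Therefore no such missing edge $uv$ can exist, and $G' = G$.

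There is no real obstacle here; the only subtlety worth flagging in the write-up is the bipartiteness-based exclusion of length-$2$ detours, which is what makes the argument work in $3$ hops (as opposed to, say, $2$ hops where a $K_{2,2}$-free condition would not suffice).
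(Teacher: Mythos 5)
Your proof is correct, and since the paper states this observation without proof, your argument (parity rules out a 2-hop detour, a 3-hop detour $u\hbox{-}a\hbox{-}b\hbox{-}v$ together with the edge $uv$ yields a $K_{2,2}$ on $\{u,b\}\times\{a,v\}$) is exactly the intended one. The only point worth making explicit in a write-up is that the four vertices are distinct, which follows since the length-3 path is a genuine shortest path and $u,b$ and $a,v$ lie in opposite parts.
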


For 2-hop spanners, one could similarly derive lower bounds from constructions of graphs without 3-cycles (i.e., triangle-free graphs).
However, for fat objects, triangle-free intersection graphs actually are known to have linear size.
Instead, we propose the following lemma, showing that constructions of large bipartite intersection graphs between two sets of objects that avoid $K_{2,2}$ (where objects within each set
are allowed to intersect) can also yield lower bounds for 2-hop spanners for (not-necessarily bipartite) intersection graphs:

\newcommand{\GG}{H}

\begin{lemma}\label{lem:lb:2hop}
If $G=(U\cup V,E)$ is a graph (with $U\cap V=\emptyset$), and the bipartite subgraph $E'=E\cap (U\times V)$ does not contain $K_{2,2}$,
then any 2-hop spanner of $G$ must have at least $|E'|$ edges.
\end{lemma}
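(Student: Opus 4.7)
The plan is to build an injection $\iota : E' \to H$; the bound $|H| \ge |E'|$ follows immediately. For every $e \in E' \cap H$ I set $\iota(e) = e$. For each $e = (u,v) \in E' \setminus H$ with $u \in U$ and $v \in V$, the $2$-hop spanner property supplies a witness path $u - w_e - v$ in $H$, and I fix one such witness. Since $U$ and $V$ are disjoint, $w_e$ lies in exactly one of the two sides, and I always charge $e$ to the \emph{same-side} witness edge:
\[
  \iota(e) \;=\; \begin{cases} (u,w_e) & \text{if } w_e \in U, \\ (w_e,v) & \text{if } w_e \in V. \end{cases}
\]
Under this choice, $\iota(E' \cap H) \subseteq E' \cap H$ while $\iota(E' \setminus H) \subseteq H \cap \bigl((U \times U) \cup (V \times V)\bigr)$; these two portions of $H$ are disjoint, so only collisions inside $\iota(E' \setminus H)$ remain to be ruled out.

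This is where the $K_{2,2}$-free hypothesis is used. Suppose $\iota(e_1) = \iota(e_2) = f$ for distinct $e_1, e_2 \in E' \setminus H$; by symmetry assume $f = (a,b) \in H \cap (U \times U)$. Matching $\iota(e_i)$ with $f$ forces each $e_i$ into one of two patterns: either $e_i = (a, v_i)$ with witness $a - b - v_i$, or $e_i = (b, v_i)$ with witness $b - a - v_i$. Regardless of the pattern, the edge $e_i$ together with its companion witness edge (which lies in $H \cap (U \times V) \subseteq E'$) places both $(a, v_i)$ and $(b, v_i)$ in $E'$. Hence the four edges $(a, v_1)$, $(b, v_1)$, $(a, v_2)$, $(b, v_2)$ all lie in $E'$. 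When $v_1 \ne v_2$ these form a $K_{2,2}$ in $E'$, contradicting the hypothesis; when $v_1 = v_2$, the same-pattern combinations collapse to $e_1 = e_2$, while the mixed-pattern combinations force the edge $(a, v_1)$ or $(b, v_1)$ back into $H$ via the other $e_i$'s companion edge, contradicting $e_1, e_2 \notin H$.

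The main obstacle is engineering the charging so that every potential collision actually exposes a $K_{2,2}$ in $E'$. The crucial design choice is the same-side rule: it simultaneously guarantees that the two image portions in $H$ are disjoint and that, in any would-be collision, both companion witness edges end up in $E'$, supplying the four corners of the forbidden bipartite subgraph. Once this case analysis is in place, injectivity of $\iota$ yields $|H| \ge |\iota(E')| = |E'|$.
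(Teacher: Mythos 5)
Your proof is correct and follows essentially the same route as the paper's: charge each edge of $E'$ either to itself or to the \emph{same-side} witness edge of its $2$-hop path, and show that a double charge would produce a $K_{2,2}$ inside $E'$. Your treatment is only marginally more explicit than the paper's in the degenerate case $v_1=v_2$, which the paper dispatches by noting that the four edges $f_1,g_1,f_2,g_2$ must be distinct because $f_1,f_2\notin H$ while $g_1,g_2\in H$.
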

\begin{proof}
    Basically, we want to show that extra edges in $U\times U$ and $V\times V$ do not help.
    We use a charging argument. Let $\GG$ be a 2-hop spanner of $G$. For each edge $uv\in E'$ with $u\in U$ and $v\in V$, if $uv\in \GG$, we charge the edge $uv$ to itself. Otherwise, $\GG$ must contain edges $ux$ and $xv$ for some $x$.  Without loss of generality, say $x\in U$. We charge the edge $uv$ to the edge $ux$.
    
    Since the total charge is $|E'|$, it suffices to show that no edge in $\GG$ receives more than one charge. Suppose that some edge $e$ in $\GG$ receives charges from two different edges $f_1,f_2\in E'$. For $f_1$ to be charged to $e$, there must exist an edge $g_1\in E'$ such that $e,f_1,g_1$ form a triangle. Similarly, there must exist an edge $g_2\in E'$ such that $e,f_2,g_2$ form a triangle. Since $f_1,f_2\not\in\GG$ and $g_1,g_2\in \GG$, the edges $f_1,g_1,f_2,g_2$ are distinct
    and form a $K_{2,2}$ in $E'$: a contradiction.
\end{proof}

We can use known combinatorial results to construct $K_{2,2}$-free intersection graphs:

\begin{lemma}\label{lem:K22}  
For every positive integer $n$, the following hold:
\begin{enumerate}
\item[(a)] There exist $n$ points and $n$ lines in $\R^2$ whose incidence graph does not contain $K_{2,2}$ and has $\Omega(n^{4/3})$ edges.
\item[(b)] There exist $n$ tetrahedra in $\R^3$ whose intersection graph is bipartite, does not contain $K_{2,2}$, and has $\Omega(n^{4/3})$ edges.
\item[(c)] There exist $n$ red/blue congruent regular tetrahedra (or congruent non-axis-aligned cubes) in $\R^3$ such that the bipartite intersection graph between the red tetrahedra and the blue tetahedra does not contain $K_{2,2}$ and has $\Omega(n^{4/3})$ edges.
\item[(d)] There exist $n$ points and $n$ halfspaces in $\R^5$ whose incidence graph does not contain $K_{2,2}$ and has $\Omega(n^{4/3})$ edges.
\item[(e)] There exist $n$ red/blue congruent balls in $\R^5$ such that the bipartite intersection graph between the red balls and the blue balls does not contain $K_{2,2}$ and has $\Omega(n^{4/3})$ edges.
\item[(f)] There exist $n$ points and $n$ axis-aligned boxes in $\R^d$ whose incidence graph does not contain $K_{2,2}$ and has
$\Omega(n (\log n/\log\log n)^{d-1})$ edges.
\item[(g)] There exist $n$ axis-aligned boxes in $\R^d$ whose intersection graph is bipartite, does not contain $K_{2,2}$, and has $\Omega(n (\log n/\log\log n)^{d-2})$ edges.
\item[(h)] There exist $n$ red/blue congruent axis-aligned hypercubes in $\R^d$ such that the bipartite intersection graph between the red hypercubes and blue hypercubes does not contain $K_{2,2}$ and has $\Omega(n (\log n/\log\log n)^{\down{d/2}-1})$ edges.
\end{enumerate}
\end{lemma}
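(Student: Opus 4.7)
The plan is to establish each of the eight items of \Cref{lem:K22} by starting from a classical extremal construction from incidence geometry and fattening its primitives into the required geometric objects, while verifying $K_{2,2}$-freeness in each case.

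For (a), I use the Elekes construction underlying Szemer\'edi--Trotter: taking $P=[k]\times[2k^2]$ together with the lines $L=\{y=ax+b: a\in[k],\ b\in[k^2]\}$ yields $n=\Theta(k^3)$ points, $n$ lines, and $\Theta(n^{4/3})$ incidences, and $K_{2,2}$-freeness is immediate since two distinct points determine at most one common line. Part (f) is Chazelle's classical orthogonal range-searching lower bound, which supplies $n$ points and $n$ axis-aligned boxes in $\R^d$ realizing $\Omega(n(\log n/\log\log n)^{d-1})$ incidences with no $K_{2,2}$. For (b) and (g), I would \emph{thicken} these incidence constructions: embed the point-line configuration of (a) (resp.\ the point-box configuration of (f)) into $\R^3$ (resp.\ $\R^d$), replace each point by a tiny red tetrahedron (resp.\ box) and each line by a very thin elongated blue tetrahedron (resp.\ a thin blue slab) along its lift; if the thickenings are small enough, the red--blue intersections reproduce exactly the original incidences, so both the edge count and the $K_{2,2}$-free property transfer to a bipartite intersection graph.

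For (d), I would use the paraboloid lifting $x\mapsto(x,\|x\|^2)$, which converts point-in-ball incidences in $\R^d$ into point-in-halfspace incidences in the lifted space of two extra dimensions; starting from a $K_{2,2}$-free point-ball incidence construction in $\R^3$ with $\Omega(n^{4/3})$ incidences (an Erd\H{o}s-type construction on a cubic grid with a tailored family of balls of varying radii), the lift yields $n$ points and $n$ halfspaces in $\R^5$ as claimed. For (e), two congruent balls in $\R^5$ intersect iff their centers lie within a fixed Euclidean distance, which is again a halfspace condition after the standard paraboloid lift; the same $K_{2,2}$-free construction of $\Omega(n^{4/3})$ pairs at a prescribed distance in $\R^5$ transfers directly to bipartite intersections of congruent balls.

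The remaining cases (c) and (h) are the most delicate because the requirement of \emph{congruent} objects rules out arbitrarily thin needles. For (c), I would replace each line of the construction in (a) by a chain of congruent unit regular tetrahedra (or unit non-axis-aligned cubes) aligned with that line, keeping one unit copy per point; choosing the chains so that any two of them meet at most one point-object preserves $K_{2,2}$-freeness and maintains the $\Omega(n^{4/3})$ bound. For (h), I would encode each coordinate of a Chazelle box in $\R^{\down{d/2}}$ using a pair of coordinates in $\R^d$ via the identification of the query point with $(p,-p)$, which, together with short chaining for longer intervals, reproduces a $\down{d/2}$-dimensional Chazelle construction using congruent unit hypercubes in $\R^d$; this explains the exponent $\down{d/2}-1$ in place of $d-1$. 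The principal obstacle I expect lies precisely in (c), (e), and (h): the congruence constraint eliminates the freedom supplied by free fattening, so the chained or paired-coordinate encodings must be analyzed carefully to ensure that no pair of unit copies on one side accidentally meets the same pair of objects on the other side, which would create a forbidden $K_{2,2}$. Aside from this verification, the remaining work is routine packaging of standard extremal constructions.
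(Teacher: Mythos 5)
Your items (a), (f), (b), (g), and (h) are essentially on the paper's track: (a) and (f) are the standard Erd\H{o}s/Elekes and Chazelle constructions, (b) and (g) use the same lifting-to-distinct-heights idea (do make explicit that the blue objects are placed at distinct heights $z=i$ and the red objects are vertical, so that each color class is pairwise disjoint and the graph is genuinely bipartite --- naive "thickening in the plane" would make the blue needles cross each other), and in (h) your $(p,-p)$ pairing of coordinates is exactly the paper's encoding; no "chaining" is needed there, since each coordinate of the $\R^{\down{d/2}}$ box is handled by one pair of length-$M$ intervals, one enforcing the lower bound and one the upper bound.

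The genuine gaps are in (c), (d), and (e). For (c), replacing each line by a chain of congruent unit tetrahedra does not work: if the point--line configuration is large compared to the unit, each line needs superconstantly many tetrahedra and the object count (hence the normalization of the $\Omega(n^{4/3})$ bound) is destroyed; if you rescale the configuration to constant diameter, the unit tetrahedra are as large as the whole picture and spurious intersections appear. You correctly flag this as the delicate point but do not resolve it. The paper's resolution is a different idea: clip each line to a segment of length $M$, realize each point $p$ as a regular tetrahedron of side $M$ lying in the halfspace $z\ge 0$ and touching the plane $z=0$ \emph{only at} $p$, and each segment $\ell$ as a congruent tetrahedron in $z\le 0$ touching $z=0$ only along $\ell$; then red meets blue iff $p\in\ell$, all objects are congruent, and each color class lies in a separate halfspace. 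For (d), your route presupposes a $K_{2,2}$-free point--ball incidence construction in $\R^3$ with $\Omega(n^{4/3})$ incidences, which is not a standard off-the-shelf fact (point-in-ball containment graphs contain $K_{2,2}$ generically, and points on spheres can too); moreover the paraboloid lift from $\R^3$ lands in $\R^4$, not $\R^5$. The paper instead squares the line equation, $(ax+by-1)^2\le\delta$, and linearizes via the Veronese map $(x,y)\mapsto(x^2,xy,y^2,x,y)\in\R^5$, so the incidence structure of (a) --- and hence $K_{2,2}$-freeness --- is inherited exactly. For (e), you cannot get from halfspaces in $\R^5$ to congruent balls in $\R^5$ by another lift (that would raise the dimension); the paper approximates each halfspace by a ball of huge radius $M$ and then splits the radius, replacing each point by a red ball of radius $M/2$ and each giant ball by a concentric blue ball of radius $M/2$, so that red--blue intersection is equivalent to the original containment. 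As written, your proposal establishes (a), (f), and with minor repairs (b), (g), (h), but leaves (c), (d), (e) without a working construction.
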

\begin{proof}\

\begin{enumerate}
\item[(a)] This is well-known, by a construction of Erd\H os~\cite{Erdos85} (see also~\cite{BalkoST23,Elekes}) and incidence graphs between points and lines in $\R^2$ automatically do not contain $K_{2,2}$.
\item[(b)]  Let $P$ be the set of points and $L$ be the set of lines in $\R^2$ from (a).
Map each point $p\in P$ to the red vertical line $\varphi(p) = p\times (-\infty,\infty)$ in $\R^3$.
Map the $i$-th line $\ell\in L$ to a blue line $\psi(\ell) = \ell\times \{i\}$ in $\R^3$ (which lies
on the horizontal plane $z=i$).
Then $\varphi(p)$ intersects $\psi(\ell)$ iff $p$ lies on $\ell$.
Moreover, there are no red-red or blue-blue intersections.
The result then follows by viewing these red/blue lines as thin tetrahedra.
\item[(c)] Let $P$ be the set of points and $L$ be the set of lines in $\R^2$ from (a).
We may clip each line $\ell\in L$ to a line segment of a sufficiently large length $M$.
Map each point $p\in P$ to some red regular tetrahedron $\varphi(p)$ in $\R^3$ 
that has side length $M$, lies inside the halfspace $z\ge 0$, and touches the plane $z=0$ precisely at the point $p\times \{0\}$.
Map each line segment $\ell\in P$ to some blue regular tetrahedron $\psi(\ell)$ in $\R^3$ 
that has side length $M$, lies inside the halfspace $z\le 0$, and touches the plane $z=0$ precisely at the line segment $\ell\times \{0\}$.
Then $\varphi(p)$ intersects $\psi(\ell)$ iff $p$ lies on $\ell$.
The construction is similar for non-axis-aligned cubes.
\item[(d)] This was noted by Chan and Har-Peled~\cite{CH23}.
Specifically, let $P$ be the set of points and $L$ be the set of lines in $\R^2$ from (a).
A point $(x,y)\in P$ is incident on a line in $L$ with equation $ax+by=1$ iff
$(ax+by-1)^2 \le \delta$, i.e., $a^2x^2 + 2abxy+b^2y^2 -2ax-2by + 1\le \delta$, for a sufficiently small $\delta>0$.
Map each point $p=(x,y)\in P$ to a point $\varphi(p)=(x^2,xy,y^2,x,y)\in\R^5$.
Map each line $\ell\in L$ with equation $ax+by=1$ to a halfspace $\psi(\ell)=\{(\xi_1,\xi_2,\xi_3,\xi_4,\xi_5)\in\R^5:
a^2\xi_1 + 2ab\xi_2 + b^2\xi_3 -2a\xi_4 - 2b\xi_5 + 1\le \delta\}$.
Then $\varphi(p)$ lies inside $\psi(\ell)$ iff $p$ lies on $\ell$.
\item[(e)]
Let $P$ be the set of points and $S$ be the set of halfspaces in $\R^5$ from (d).
We can view each halfspace of $S$ as a giant ball, of a fixed and sufficiently large radius $M$.
Now map each point $p\in P$ to a red ball $\varphi(p)$ centered at $p$ of radius $M/2$.
Map each ball $s\in S$ to a blue ball $\psi(s)$ with the same center of radius $M/2$.
Then $\varphi(p)$ intersects $\psi(s)$ iff $p$ lies inside $s$.
\item[(f)]
This follows from a construction by Chazelle~\cite{Chazelle90}, as noted by Chan and Har-Peled~\cite{CH23}.
\item[(g)]
Let $P$ be the set of points and $S$ the set of boxes in $\R^{d-1}$ from (f).
Map each point $p\in P$ to a red vertical line $\varphi(p)= p\times (-\infty,\infty)$ in $\R^d$.
Map the $i$-th box $s\in S$ to a blue box $\psi(s) = s\times \{i\}$ in $\R^d$.
Then $\varphi(p)$ intersects $\psi(s)$ iff $p$ is inside $s$.
Moreover, there are no red-red or blue-blue intersections.
The result then follows by viewing the red lines as thin boxes.
\item[(h)]
Say $d$ is even.
Let $P$ be the set of points and $S$ be the set of boxes in $\R^{d/2}$ from (f).
Map each point $p=(x_1,\ldots,x_{d/2})\in P$ to a red hypercube $\varphi(p)= [-x_1,-x_1+M]\times [x_1,x_1+M] \times\cdots\times [-x_{d/2},-x_{d/2}+M]\times [x_{d/2},x_{d/2}+M]$ in $\R^d$ for a sufficiently large side length $M$.
Map each box $s=[a_1,b_1]\times\cdots\times [a_{d/2},b_{d/2}]\in S$ to a blue hypercube $\psi(s)=[-a_1-M,-a_1]\times [b_1-M,b_1] \times\cdots\times
[a_{d/2}-M,a_{d/2}]\times [b_{d/2}-M,b_{d/2}]$ in $\R^d$.
Then $\varphi(p)$ intersects $\psi(s)$ iff $p$ lies inside $s$.
\end{enumerate}
%
\end{proof}

Lower bounds for 3-hop and 2-hop spanners for a variety of intersection graphs now immediately follow.

\begin{theorem}\label{thm:lb}
For every positive integer $n$, the following holds:
\begin{enumerate}
\item[(i)] There exist $n$ tetrahedra in $\R^3$ such that any 3-hop spanner of their intersection graph requires $\Omega(n^{4/3})$ edges.
\item[(ii)] There exist $n$ congruent regular tetrahedra (or congruent non-axis-aligned cubes) in $\R^3$ such that any 2-hop spanner of their intersection graph requires $\Omega(n^{4/3})$ edges.
\item[(iii)] There exist $n$ congruent balls in $\R^5$ such that any 2-hop spanner of their intersection graph requires $\Omega(n^{4/3})$ edges.
\item[(iv)] There exist $n$ axis-aligned boxes in $\R^d$ such that any 3-hop spanner of their intersection graph requires $\Omega(n (\log n/\log\log n)^{d-2})$ edges.
\item[(v)] There exist $n$ congruent axis-aligned hypercubes in $\R^d$ such that any 2-hop spanner of their intersection graph requires $\Omega(n (\log n/\log\log n)^{\down{d/2}-1})$ edges.
\end{enumerate}
\end{theorem}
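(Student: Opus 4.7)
The plan is to assemble the theorem directly from the $K_{2,2}$-free constructions of \Cref{lem:K22} combined with \Cref{obs:lb:3hop} for the 3-hop parts and \Cref{lem:lb:2hop} for the 2-hop parts. In each case, \Cref{lem:K22} hands us a bipartite intersection graph $G'=(U\cup V,E')$ (of the required object type) that avoids $K_{2,2}$ and has the right number of edges; then either \Cref{obs:lb:3hop} or \Cref{lem:lb:2hop} converts this combinatorial bound into a spanner-size lower bound. The main obstacle has already been dispensed with in the proof of \Cref{lem:K22}, where delicate geometric liftings (from the Erd\H{o}s point-line construction or Chazelle's point-box construction) are carried out. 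What remains here is essentially bookkeeping to match each case of the theorem to the appropriate case of \Cref{lem:K22}.

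In more detail, for part (i) we take the bipartite $K_{2,2}$-free configuration of $n$ tetrahedra from \Cref{lem:K22}(b), regard it as a single (not-necessarily bipartite) intersection graph, and invoke \Cref{obs:lb:3hop}: since there are no intra-side edges at all, the graph is itself bipartite, so any 3-hop spanner must retain all $\Omega(n^{4/3})$ edges. Part (iv) is handled identically using \Cref{lem:K22}(g): the construction lifts boxes in $\R^{d-1}$ to red vertical segments and blue horizontal boxes in $\R^d$ with no same-color crossings, giving a bipartite $K_{2,2}$-free intersection graph with $\Omega(n(\log n/\log\log n)^{d-2})$ edges, and \Cref{obs:lb:3hop} again forces any 3-hop spanner to contain all of them.

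For parts (ii), (iii), and (v) we use \Cref{lem:lb:2hop} instead, since we need a lower bound for 2-hop spanners of the full (red $\cup$ blue) intersection graph, which may contain many intra-color edges. The key point is that \Cref{lem:lb:2hop} says exactly what we need: extra edges in $U\times U$ or $V\times V$ cannot help a 2-hop spanner to shortcut the bipartite edges, provided the bipartite part is $K_{2,2}$-free. So for (ii) we take $U$ and $V$ to be the red and blue regular tetrahedra (or non-axis-aligned cubes) from \Cref{lem:K22}(c), for (iii) we take the red and blue congruent balls from \Cref{lem:K22}(e), and for (v) we take the red and blue congruent axis-aligned hypercubes from \Cref{lem:K22}(h); in each case the bipartite red-blue intersection graph is $K_{2,2}$-free with the claimed number of edges, so \Cref{lem:lb:2hop} gives the stated lower bound for any 2-hop spanner of the full intersection graph.

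The only subtlety worth double-checking is that the intersection graphs we hand to \Cref{obs:lb:3hop}/\Cref{lem:lb:2hop} really are the intersection graphs of the stated object families in the stated ambient dimension. This is immediate in all five cases from the geometric liftings in \Cref{lem:K22}: the vertical-segment-versus-horizontal-segment trick in (b) and (g) uses unequal-color positioning to suppress monochromatic intersections, while the constructions in (c), (e), (h) place the red and blue objects on opposite sides of a separating hyperplane so that monochromatic intersections are irrelevant to the bipartite count and can only add edges of type $U\times U$ or $V\times V$, which \Cref{lem:lb:2hop} already allows. Thus no new geometric work is required and the theorem follows.
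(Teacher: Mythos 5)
Your proposal is correct and matches the paper's proof exactly: the paper likewise derives (i) and (iv) from \Cref{lem:K22}(b),(g) via \Cref{obs:lb:3hop}, and (ii), (iii), (v) from \Cref{lem:K22}(c),(e),(h) via \Cref{lem:lb:2hop}. The only quibble is that your aside about a ``separating hyperplane'' in cases (e) and (h) is not quite what those constructions do, but as you note yourself, \Cref{lem:lb:2hop} tolerates arbitrary monochromatic edges, so nothing depends on it.
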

\begin{proof}\

\begin{enumerate}
\item[(i)] By \Cref{lem:K22}(b) and \Cref{obs:lb:3hop}.
\item[(ii)] By \Cref{lem:K22}(c) and \Cref{lem:lb:2hop}.
\item[(iii)] By \Cref{lem:K22}(e) and \Cref{lem:lb:2hop}.
\item[(iv)] By \Cref{lem:K22}(g) and \Cref{obs:lb:3hop}.
\item[(v)] By \Cref{lem:K22}(h) and \Cref{lem:lb:2hop}.\qedhere
\end{enumerate}
%
\end{proof}

\begin{remark}\rm
It remains open to prove better lower bounds for simplices in dimensions $d>3$ or balls in dimensions $d>5$, ideally with exponent converging to $3/2$ as $d$ increases.
It is possible to adapt \Cref{obs:lb:3hop} and \Cref{lem:lb:2hop} to avoid $K_{2,c}$ instead of $K_{2,2}$ for any constant~$c$ (while losing some $c^{O(1)}$ factors), but we are not aware of any stronger lower bounds on geometric graphs avoiding $K_{2,c}$ that could be exploited here (except in some lop-sided bipartite cases).

\end{remark}

One common property to all the geometric intersection graphs mentioned in this paper is that they all have bounded VC-dimension\footnote{The \emph{VC-dimension} of a graph is defined as the VC-dimension of the set system induced by the neighborhoods of its vertices.}. As already noted in the introduction, bounded VC-dimension alone does not guarantee a $3$-spanner with $o(n^{3/2})$ edges, as demonstrated in the following theorem:
\begin{theorem}\label{thm:bipartite-boundedVC-LB}
For any $n\in\mathbb{N}$, there exists a graph $G=(V_1 \cup V_2,E)$
with VC-dimension at most $2$ so that any $3$-spanner for $G$ has $\Omega(n^{3/2})$ edges.
\end{theorem}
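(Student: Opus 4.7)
The plan is to take $G$ to be the point--line incidence graph of a finite projective plane $\mathrm{PG}(2,q)$ for a prime $q=\Theta(\sqrt{n})$ (whose existence is guaranteed by Bertrand's postulate), padded with isolated vertices to reach exactly $n$ vertices. This standard construction produces a bipartite graph with $|V_1|=|V_2|=q^2+q+1=\Theta(n)$ and $|E|=(q^2+q+1)(q+1)=\Theta(n^{3/2})$. The argument then proceeds in three short steps: verify that $G$ is $K_{2,2}$-free, verify that its VC-dimension is at most $2$, and apply the observation underlying \Cref{obs:lb:3hop} to conclude that every $3$-spanner must contain every edge of $G$.

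For the first step, I would use the defining axiom that any two distinct points of a projective plane lie on exactly one common line (and dually). This immediately says that no two vertices of $V_1$ share two neighbors in $V_2$, and no two vertices of $V_2$ share two neighbors in $V_1$, so $G$ contains no copy of $K_{2,2}$ and in particular no $4$-cycle.

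For the second step, I would first note that, because $G$ is bipartite, the neighborhood of any vertex lies entirely on one side, so any shattered set $S$ must itself lie entirely in $V_1$ or entirely in $V_2$; otherwise the trace $N(v)\cap S$ could never contain both a vertex of $V_1$ and a vertex of $V_2$. So without loss of generality, suppose $S=\{s_1,s_2,s_3\}\subseteq V_1$ were shattered. Then $V_2$ would need to contain distinct witnesses $v,v'$ for the two traces $\{s_1,s_2\}$ and $\{s_1,s_2,s_3\}$, and both witnesses would be common neighbors of $s_1$ and $s_2$. This yields a copy of $K_{2,2}$, contradicting the first step. Hence the VC-dimension of $G$ is at most $2$.

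For the third step, I would use the same reasoning as in \Cref{obs:lb:3hop}: if $uv\in E$ is deleted, then since $G$ is bipartite any alternative $u$--$v$ path has odd length, and any alternative path of length $3$ would close a $4$-cycle together with $uv$; since $G$ has no $4$-cycle, the distance jumps to $\ge 5$. Thus every edge of $G$ must appear in every $3$-spanner, giving the bound $\Omega(n^{3/2})$. The only genuinely subtle point, and where most care is needed, is the VC-dimension step: one must first exploit bipartiteness to reduce shattering to a single side before translating VC-dimension $\ge 3$ into the existence of a $K_{2,2}$. The existence of a suitable prime $q$ and the padding to reach exactly $n$ vertices are routine.
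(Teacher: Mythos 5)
Your proposal is correct and follows essentially the same route as the paper: a $C_4$-free bipartite incidence graph of a projective plane with $\Theta(n^{3/2})$ edges, a check that $K_{2,2}$-freeness forces VC-dimension at most $2$, and the observation that in a $C_4$-free bipartite graph every edge is forced into any $3$-spanner. The paper states the VC-dimension bound without detail, so your explicit reduction of shattering to one side and then to a $K_{2,2}$ is a welcome (and correct) elaboration rather than a deviation.
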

\begin{proof}
    The construction consists of the known constructions of bipartite graphs without $C_4$ with $\Omega(n^{3/2})$ edges (e.g., the construction with projective planes over finite fields $(F_p)^2$). Its easy to see that the VC-dimension of such graphs is at most 2. By \Cref{obs:lb:3hop}, any $3$-spanner must use all bipartite edges.
\end{proof}

\begin{remark}
Bounded VC-dimension is known to imply the existence of spanning trees with sublinear crossing number~\cite{ChazelleW89},
biclique covers with subquadratic size, and various other properties often used in geometric range searching.  
Along the same lines, one may wonder whether general set systems with bounded VC-dimension admit some analog to $(1/r)$-cuttings~\cite{Chazelle04}
or polynomial partitioning.  Interestingly, our work provides a negative answer to this question: An analog to cuttings would
imply an $o(n^{3/2})$ upper bound for 3-hop spanners, by following the method in
the proof of \Cref{thm:3hop,thm:3hop:halfspace}, and this would contradict \Cref{thm:bipartite-boundedVC-LB}.
\end{remark}

\section{Open Problems}
\label{sec:con}

The following problems remain open.

\begin{enumerate}

    \item Is there a 2-spanner of size $o(n^{4/3})$ for $n$ balls in $\R^3$?
     \Cref{cor:2hop:ball} gives an upper bound of $O^*(n^{4/3})$ in $\R^3$ and \Cref{thm:lb}(iii) gives a lower bound of $\Omega(n^{4/3})$ but in $\R^5$. The current best lower bound in $\R^3$ is only $\Omega(n\frac{\log n}{\log \log n})$ that holds already for $\R^2$~\cite{ConroyT23}.
        
   \item Is there a 2-spanner of size $o(n^{3/2})$ for general fat objects in the plane, not necessarily of constant description complexity? 
    In \Cref{sec:2-hop}, we proved the upper bound $O(n^{3/2})$. The near-linear bound
   $n\log n\cdot 2^{O(\log^*n)}$ in \Cref{tab:old-results} holds only for fat objects of constant algebraic description complexity~\cite{AronovBES14}. 

   \item Is there a 2-spanner of size $O(n^{3/2-\delta})$ for any fat objects in $\R^3$ for an absolute constant $\delta>0$ independent of the description complexity? In \Cref{sec:2-hop}, we proved the upper bound $O(n^{3/2})$, but the current best lower bound is only $\Omega(n^{4/3})$, which holds more specifically for unit regular tetrahedra in $\R^3$.

    \item Close the gap between the lower and upper bounds for 3-spanners of several important families of geometric objects in the plane: The current bounds are 
   (i)  $\Omega(n)$ and $O(n\log^3 n)$ for string graphs; 
   (ii) $\Omega(n)$ and $O(n\log n)$ for arbitrary fat convex bodies;
   (iii) $\Omega(n \log n/ \log \log n)$ and $O(n\log n)$ for axis-aligned squares; and 
   (iv) $\omega(n)$ and $O(n\log n)$ for axis-aligned rectangles; see \Cref{tab:old-results}.

   \item Close the gap between the $\Omega(n^{4/3})$ lower bound and the
   $O^*(n^{10/7})$ upper bound for 3-spanners for tetrahedra in $\R^3$.
\end{enumerate}

\bibliographystyle{alphaurl}
\bibliography{main.bib}

\newcommand{\etalchar}[1]{$^{#1}$}
\begin{thebibliography}{AdBES14}

\bibitem[AAE{\etalchar{+}}25]{AAEKS25}
Pankaj~K. Agarwal, Boris Aronov, Esther Ezra, Matthew~J. Katz, and Micha
  Sharir.
\newblock Intersection queries for flat semi-algebraic objects in three
  dimensions and related problems.
\newblock {\em ACM Trans. Algorithms}, 2025.
\newblock Just Accepted.
\newblock \href {https://doi.org/10.1145/3721290} {\path{doi:10.1145/3721290}}.

\bibitem[AdBES14]{AronovBES14}
Boris Aronov, Mark de~Berg, Esther Ezra, and Micha Sharir.
\newblock Improved bounds for the union of locally fat objects in the plane.
\newblock {\em {SIAM} J. Comput.}, 43(2):543--572, 2014.
\newblock \href {https://doi.org/10.1137/120891241}
  {\path{doi:10.1137/120891241}}.

\bibitem[ADD{\etalchar{+}}93]{althofer1993sparse}
Ingo Alth{\"o}fer, Gautam Das, David Dobkin, Deborah Joseph, and Jos{\'e}
  Soares.
\newblock On sparse spanners of weighted graphs.
\newblock {\em Discrete \& Computational Geometry}, 9(1):81--100, 1993.
\newblock \href {https://doi.org/10.1007/BF02189308}
  {\path{doi:10.1007/BF02189308}}.

\bibitem[AM94]{AgarwalM94}
Pankaj~K. Agarwal and Jir{\'{\i}} Matou{\v s}ek.
\newblock On range searching with semialgebraic sets.
\newblock {\em Discret. Comput. Geom.}, 11:393--418, 1994.
\newblock \href {https://doi.org/10.1007/BF02574015}
  {\path{doi:10.1007/BF02574015}}.

\bibitem[AMN{\etalchar{+}}98]{AryaMNSW98}
Sunil Arya, David~M. Mount, Nathan~S. Netanyahu, Ruth Silverman, and Angela~Y.
  Wu.
\newblock An optimal algorithm for approximate nearest neighbor searching fixed
  dimensions.
\newblock {\em J. {ACM}}, 45(6):891--923, 1998.
\newblock \href {https://doi.org/10.1145/293347.293348}
  {\path{doi:10.1145/293347.293348}}.

\bibitem[BCS{\etalchar{+}}21]{basit21}
Abdul Basit, Artem Chernikov, Sergei Starchenko, Terence Tao, and Chieu-Minh
  Tran.
\newblock Zarankiewicz's problem for semilinear hypergraphs.
\newblock {\em Forum of Mathematics, Sigma}, 9:e59, 2021.
\newblock \href {https://doi.org/10.1017/fms.2021.52}
  {\path{doi:10.1017/fms.2021.52}}.

\bibitem[Bin20]{biniaz2020plane}
Ahmad Biniaz.
\newblock Plane hop spanners for unit disk graphs: {S}impler and better.
\newblock {\em Comput. Geom.}, 89:101622, 2020.
\newblock \href {https://doi.org/10.1016/j.comgeo.2020.101622}
  {\path{doi:10.1016/j.comgeo.2020.101622}}.

\bibitem[BPR06]{BPR06}
Saugata Basu, Richard Pollack, and Marie-Francoise Roy.
\newblock {\em Algorithms in Real Algebraic Geometry}, volume~10 of {\em
  Algorithms and Computation in Mathematics}.
\newblock Springer, Berlin, 2006.
\newblock \href {https://doi.org/10.1007/3-540-33099-2}
  {\path{doi:10.1007/3-540-33099-2}}.

\bibitem[BST23]{BalkoST23}
Martin Balko, Adam Sheffer, and Ruiwen Tang.
\newblock The constant of point-line incidence constructions.
\newblock {\em Comput. Geom.}, 114:102009, 2023.
\newblock URL: \url{https://doi.org/10.1016/j.comgeo.2023.102009}, \href
  {https://doi.org/10.1016/J.COMGEO.2023.102009}
  {\path{doi:10.1016/J.COMGEO.2023.102009}}.

\bibitem[BSVK02]{BergSVK02}
{Mark de} Berg, {A. Frank van der} Stappen, Jules Vleugels, and Matthew~J.
  Katz.
\newblock Realistic input models for geometric algorithms.
\newblock {\em Algorithmica}, 34(1):81--97, 2002.
\newblock URL: \url{https://doi.org/10.1007/s00453-002-0961-x}, \href
  {https://doi.org/10.1007/S00453-002-0961-X}
  {\path{doi:10.1007/S00453-002-0961-X}}.

\bibitem[CCV10]{CatusseCV10}
Nicolas Catusse, Victor Chepoi, and Yann Vax{\`{e}}s.
\newblock Planar hop spanners for unit disk graphs.
\newblock In Christian Scheideler, editor, {\em Proc. 6th {ALGOSENSORS}},
  volume 6451 of {\em LNCS}, pages 16--30. Springer, 2010.
\newblock URL: \url{https://doi.org/10.1007/978-3-642-16988-5\_2}, \href
  {https://doi.org/10.1007/978-3-642-16988-5_2}
  {\path{doi:10.1007/978-3-642-16988-5_2}}.

\bibitem[CH23]{ChanH23}
Timothy~M. Chan and Zhengcheng Huang.
\newblock Constant-hop spanners for more geometric intersection graphs, with
  even smaller size.
\newblock In Erin~W. Chambers and Joachim Gudmundsson, editors, {\em Proc. 39th
  International Symposium on Computational Geometry ({SoCG})}, volume 258 of
  {\em LIPIcs}, pages 23:1--23:16. Schloss Dagstuhl, 2023.
\newblock URL: \url{https://doi.org/10.4230/LIPIcs.SoCG.2023.23}, \href
  {https://doi.org/10.4230/LIPICS.SOCG.2023.23}
  {\path{doi:10.4230/LIPICS.SOCG.2023.23}}.

\bibitem[Cha90]{Chazelle90}
Bernard Chazelle.
\newblock Lower bounds for orthogonal range searching: {I}. {T}he reporting
  case.
\newblock {\em J. {ACM}}, 37(2):200--212, 1990.
\newblock \href {https://doi.org/10.1145/77600.77614}
  {\path{doi:10.1145/77600.77614}}.

\bibitem[Cha98]{Chan98}
Timothy~M. Chan.
\newblock Approximate nearest neighbor queries revisited.
\newblock {\em Discret. Comput. Geom.}, 20(3):359--373, 1998.
\newblock \href {https://doi.org/10.1007/PL00009390}
  {\path{doi:10.1007/PL00009390}}.

\bibitem[Cha03]{Chan03}
Timothy~M. Chan.
\newblock Polynomial-time approximation schemes for packing and piercing fat
  objects.
\newblock {\em J. Algorithms}, 46(2):178--189, 2003.
\newblock \href {https://doi.org/10.1016/S0196-6774(02)00294-8}
  {\path{doi:10.1016/S0196-6774(02)00294-8}}.

\bibitem[Cha04]{Chazelle04}
Bernard Chazelle.
\newblock Cuttings.
\newblock In Dinesh~P. Mehta and Sartaj Sahni, editors, {\em Handbook of Data
  Structures and Applications}. Chapman and Hall/CRC, 2004.
\newblock URL: \url{https://doi.org/10.1201/9781420035179.ch25}, \href
  {https://doi.org/10.1201/9781420035179.CH25}
  {\path{doi:10.1201/9781420035179.CH25}}.

\bibitem[Cha06]{Chan06}
Timothy~M. Chan.
\newblock Dynamic subgraph connectivity with geometric applications.
\newblock {\em {SIAM} J. Comput.}, 36(3):681--694, 2006.
\newblock \href {https://doi.org/10.1137/S009753970343912X}
  {\path{doi:10.1137/S009753970343912X}}.

\bibitem[CHP23]{CH23}
Timothy~M. Chan and Sariel Har-Peled.
\newblock On the number of incidences when avoiding an induced biclique in
  geometric settings.
\newblock In {\em Proc.\ 34th ACM-SIAM Symposium on Discrete Algorithms
  ({SODA})}, pages 1398--1413, 2023.
\newblock \href {https://doi.org/10.1137/1.9781611977554.ch50}
  {\path{doi:10.1137/1.9781611977554.ch50}}.

\bibitem[CT23]{ConroyT23}
Jonathan~B. Conroy and Csaba~D. T{\'{o}}th.
\newblock Hop-spanners for geometric intersection graphs.
\newblock {\em Journal of Computational Geometry}, 14(2):26--64, 2023.
\newblock \href {https://doi.org/10.20382/jocg.v14i2a3}
  {\path{doi:10.20382/jocg.v14i2a3}}.

\bibitem[CW89]{ChazelleW89}
Bernard Chazelle and Emo Welzl.
\newblock Quasi-optimal range searching in space of finite {VC}-dimension.
\newblock {\em Discret. Comput. Geom.}, 4:467--489, 1989.
\newblock \href {https://doi.org/10.1007/BF02187743}
  {\path{doi:10.1007/BF02187743}}.

\bibitem[Dav21]{Davies21}
James Davies.
\newblock Box and segment intersection graphs with large girth and chromatic
  number.
\newblock {\em Advances in Combinatorics}, 2021.
\newblock \href {https://doi.org/10.19086/aic.25431}
  {\path{doi:10.19086/aic.25431}}.

\bibitem[DGT22]{DumitrescuGT22}
Adrian Dumitrescu, Anirban Ghosh, and Csaba~D. T{\'{o}}th.
\newblock Sparse hop spanners for unit disk graphs.
\newblock {\em Comput. Geom.}, 100:101808, 2022.
\newblock URL: \url{https://doi.org/10.1016/j.comgeo.2021.101808}, \href
  {https://doi.org/10.1016/J.COMGEO.2021.101808}
  {\path{doi:10.1016/J.COMGEO.2021.101808}}.

\bibitem[EKNS00]{EfratKNS00}
Alon Efrat, Matthew~J. Katz, Frank Nielsen, and Micha Sharir.
\newblock Dynamic data structures for fat objects and their applications.
\newblock {\em Comput. Geom.}, 15(4):215--227, 2000.
\newblock \href {https://doi.org/10.1016/S0925-7721(99)00059-0}
  {\path{doi:10.1016/S0925-7721(99)00059-0}}.

\bibitem[Ele01]{Elekes}
Gy\"orgy Elekes.
\newblock Sums versus products in number theory, algebra and {Erd\H{o}s}
  geometry.
\newblock In G\'abor Hal\'asz, L\'aszl\'o Lov\'asz, Mikl\'os Simonovits, and
  Vera~T. S\'os, editors, {\em Paul Erd\H{o}s and his Mathematics II}, Bolyai
  Society Mathematical Studies, pages 241--290. Springer, Berlin, 2001.
\newblock URL: \url{https://link.springer.com/book/9783540422365}.

\bibitem[Erd64]{Erdos64extremalproblems}
Paul Erd\H{o}s.
\newblock Extremal problems in graph theory.
\newblock In {\em Theory of Graphs and its Applications (Proc. Sympos.
  Smolenice, 1963)}, pages 29--36, Prague, 1964. Publishing House of the
  Czechoslovak Academy of Sciences.
\newblock URL: \url{https://old.renyi.hu/~p_erdos/1964-06.pdf}.

\bibitem[Erd85]{Erdos85}
Paul Erd\H{o}s.
\newblock Problems and results in combinatorial geometry.
\newblock In {\em Discrete Geometry and Convexity}, volume 440 of {\em Ann. New
  York Acad. Sci.}, pages 1--11. Wiley Online Library, 1985.
\newblock URL: \url{https://old.renyi.hu/~p_erdos/1985-23.pdf}.

\bibitem[FK12]{FurerK12}
Martin F{\"{u}}rer and Shiva~Prasad Kasiviswanathan.
\newblock Spanners for geometric intersection graphs with applications.
\newblock {\em J. Comput. Geom.}, 3(1):31--64, 2012.
\newblock \href {https://doi.org/10.20382/jocg.v3i1a3}
  {\path{doi:10.20382/jocg.v3i1a3}}.

\bibitem[FPS{\etalchar{+}}17]{FPSSZ17}
Jacob Fox, J\'anos Pach, Adam Sheffer, Andrew Suk, and Joshua Zahl.
\newblock A semi-algebraic version of {Z}arankiewicz's problem.
\newblock {\em J. Euro. Math. Soc.}, 19(6):1785--1810, 2017.
\newblock \href {https://doi.org/10.4171/JEMS/705}
  {\path{doi:10.4171/JEMS/705}}.

\bibitem[FS13]{extremal_survey}
Zolt\'{a}n F\"{u}redi and Mikl\'{o}s Simonovits.
\newblock The history of degenerate (bipartite) extremal graph problems.
\newblock In {\em Erd{\H{o}}s Centennial}, volume~25 of {\em Bolyai Soc. Math.
  Stud.}, pages 169--264. J\'{a}nos Bolyai Math. Soc., Budapest, 2013.
\newblock \href {https://doi.org/10.1007/978-3-642-39286-3_7}
  {\path{doi:10.1007/978-3-642-39286-3_7}}.

\bibitem[GGH{\etalchar{+}}05]{GaoGHZZ05}
Jie Gao, Leonidas~J. Guibas, John Hershberger, Li~Zhang, and An~Zhu.
\newblock Geometric spanners for routing in mobile networks.
\newblock {\em {IEEE} J. Sel. Areas Commun.}, 23(1):174--185, 2005.
\newblock \href {https://doi.org/10.1109/JSAC.2004.837364}
  {\path{doi:10.1109/JSAC.2004.837364}}.

\bibitem[GK15]{GuthKatz2015}
Larry Guth and Nets~Hawk Katz.
\newblock On the \text{Erd{\H o}s} distinct distances problem in the plane.
\newblock {\em Annals of Mathematics}, 181(1):155--190, 2015.
\newblock \href {https://doi.org/10.4007/annals.2015.181.1.2}
  {\path{doi:10.4007/annals.2015.181.1.2}}.

\bibitem[GNS16]{GudmundssonNS16}
Joachim Gudmundsson, Giri Narasimhan, and Michiel H.~M. Smid.
\newblock Applications of geometric spanner networks.
\newblock In {\em Encyclopedia of Algorithms}, pages 86--90. Springer, 2016.
\newblock \href {https://doi.org/10.1007/978-1-4939-2864-4\_15}
  {\path{doi:10.1007/978-1-4939-2864-4\_15}}.

\bibitem[GV95]{GV95}
Andr\'e Galligo and Nicolai Vorobjov.
\newblock Complexity of finding irreducible components of a semialgebraic set.
\newblock {\em J. Complexity}, 11(1):174--193, 1995.
\newblock \href {https://doi.org/10.1006/jcom.1995.1007}
  {\path{doi:10.1006/jcom.1995.1007}}.

\bibitem[GZ05]{GaoZ05}
Jie Gao and Li~Zhang.
\newblock Well-separated pair decomposition for the unit-disk graph metric and
  its applications.
\newblock {\em {SIAM} J. Comput.}, 35(1):151--169, 2005.
\newblock \href {https://doi.org/10.1137/S0097539703436357}
  {\path{doi:10.1137/S0097539703436357}}.

\bibitem[HW87]{HausslerW87}
David Haussler and Emo Welzl.
\newblock epsilon-nets and simplex range queries.
\newblock {\em Discret. Comput. Geom.}, 2:127--151, 1987.
\newblock \href {https://doi.org/10.1007/BF02187876}
  {\path{doi:10.1007/BF02187876}}.

\bibitem[KLPS86]{KedemLPS86}
Klara Kedem, Ron Livne, J{\'{a}}nos Pach, and Micha Sharir.
\newblock On the union of {J}ordan regions and collision-free translational
  motion amidst polygonal obstacles.
\newblock {\em Discret. Comput. Geom.}, 1:59--70, 1986.
\newblock \href {https://doi.org/10.1007/BF02187683}
  {\path{doi:10.1007/BF02187683}}.

\bibitem[Kol04]{Koltun04}
Vladlen Koltun.
\newblock Almost tight upper bounds for vertical decompositions in four
  dimensions.
\newblock {\em J. {ACM}}, 51(5):699--730, 2004.
\newblock \href {https://doi.org/10.1145/1017460.1017461}
  {\path{doi:10.1145/1017460.1017461}}.

\bibitem[Li03]{Li03}
Xiang{-}Yang Li.
\newblock Algorithmic, geometric and graphs issues in wireless networks.
\newblock {\em Wirel. Commun. Mob. Comput.}, 3(2):119--140, 2003.
\newblock URL: \url{https://doi.org/10.1002/wcm.107}, \href
  {https://doi.org/10.1002/WCM.107} {\path{doi:10.1002/WCM.107}}.

\bibitem[Mat92]{Mat92}
Ji\v{r}\'{\i} Matou{\v{s}}ek.
\newblock Reporting points in halfspaces.
\newblock {\em Computational Geometry}, 2(3):169--186, 1992.
\newblock \href {https://doi.org/10.1016/0925-7721(92)90006-E}
  {\path{doi:10.1016/0925-7721(92)90006-E}}.

\bibitem[Mil64]{Mil64}
John Milnor.
\newblock On the betti numbers of real varieties.
\newblock {\em Proceedings of the American Mathematical Society},
  15(2):275--280, 1964.
\newblock \href {https://doi.org/10.1090/S0002-9939-1964-0161339-9}
  {\path{doi:10.1090/S0002-9939-1964-0161339-9}}.

\bibitem[MM17]{mitchell2017proximity}
Joseph~S.B. Mitchell and Wolfgang Mulzer.
\newblock Proximity algorithms.
\newblock In {\em Handbook of Discrete and Computational Geometry}, chapter~32,
  pages 849--874. Chapman and Hall/CRC, 2017.

\bibitem[MP15]{MatousekP15}
Jir{\'{\i}} Matousek and Zuzana Pat{\'{a}}kov{\'{a}}.
\newblock Multilevel polynomial partitions and simplified range searching.
\newblock {\em Discret. Comput. Geom.}, 54(1):22--41, 2015.
\newblock URL: \url{https://doi.org/10.1007/s00454-015-9701-2}, \href
  {https://doi.org/10.1007/S00454-015-9701-2}
  {\path{doi:10.1007/S00454-015-9701-2}}.

\bibitem[NS07]{narasimhan2007geometric}
Giri Narasimhan and Michiel Smid.
\newblock {\em Geometric Spanner Networks}.
\newblock Cambridge University Press, 2007.

\bibitem[Raj02]{Rajaraman02}
Rajmohan Rajaraman.
\newblock Topology control and routing in ad hoc networks: a survey.
\newblock {\em {SIGACT} News}, 33(2):60--73, 2002.
\newblock \href {https://doi.org/10.1145/564585.564602}
  {\path{doi:10.1145/564585.564602}}.

\bibitem[Smo24]{smoro2024survey}
Shakhar Smorodinsky.
\newblock A survey of {Z}arankiewicz problem in geometry, 2024.
\newblock \href {https://arxiv.org/abs/2410.03702} {\path{arXiv:2410.03702}},
  \href {https://doi.org/10.48550/arXiv.2410.03702}
  {\path{doi:10.48550/arXiv.2410.03702}}.

\bibitem[ST12]{ST12}
J\'ozsef Solymosi and Terence Tao.
\newblock An incidence theorem in higher dimensions.
\newblock {\em Discrete Comput. Geom.}, 48:255--280, 2012.
\newblock \href {https://doi.org/10.1007/s00454-012-9420-x}
  {\path{doi:10.1007/s00454-012-9420-x}}.

\bibitem[Tho65]{Tho65}
Ren\'e Thom.
\newblock Sur l’homologie des variet\'es alg\`ebriques r\'eelles.
\newblock In {\em Differential and Combinatorial Topology (A Symposium in Honor
  of Marston Morse)}, pages 255--265. Princeton Univ. Press, Princeton, NJ,
  1965.

\bibitem[TZ21]{tomonzakharov2021}
Istv\'an Tomon and Dmitriy Zakharov.
\newblock Tur\'an-type results for intersection graphs of boxes.
\newblock {\em Combinatorics, Probability and Computing}, 30(6):982--987, 2021.
\newblock \href {https://doi.org/10.1017/S0963548321000171}
  {\path{doi:10.1017/S0963548321000171}}.

\end{thebibliography}

\end{document}